\documentclass[11pt]{article}

 \usepackage[margin=1in]{geometry}

\usepackage[utf8]{inputenc}
\usepackage[english]{babel}
\usepackage[colorlinks]{hyperref}

\usepackage{amsmath}
\usepackage{amsfonts}
\usepackage{amssymb}
\usepackage{amsthm}
\usepackage{mathtools}

\usepackage{graphicx}
\usepackage{extarrows}
\usepackage[linesnumbered,ruled]{algorithm2e}
\usepackage{xcolor}
\usepackage{multicol}
\usepackage{environ}
\usepackage[T1]{fontenc}
\usepackage{csquotes}
\usepackage{orcidlink}
\usepackage{authblk}
\usepackage{tikz}
\usetikzlibrary{backgrounds,calc,decorations.pathreplacing, hobby, arrows,arrows.meta,decorations.pathreplacing,decorations.pathmorphing,shapes.symbols}
\usetikzlibrary{shapes}
\usepackage{framed}
\usepackage[most]{tcolorbox}
\usepackage{xspace}
\usepackage{mathtools}
\usepackage{booktabs}
\usepackage{tabularx}
\usepackage{enumitem}
\usepackage{thm-restate}

\hypersetup{
	linkcolor=rwthred,
	citecolor=rwthgreen,
	urlcolor=rwthblue,
	menucolor=rwthred
}

\usepackage[noabbrev,capitalise]{cleveref}

\let\emptyset\varnothing

\newtheorem{theorem}{Theorem}[section]
\newtheorem{lemma}[theorem]{Lemma}

\newcommand{\w}[2]{\ensuremath{\textcolor{leadercolor}{#1}, \textcolor{followercolor}{#2}}}

\newcommand{\tspan}[4][black]{\draw[decorate,decoration={brace,amplitude=7pt,raise=4pt}] (#2, -1.5) -- (#3, -1.5) node[midway,yshift=20pt,align=center] {#4};}

\newcommand{\ew}[2]{\ensuremath{(\textcolor{leadercolor}{#1}, \textcolor{followercolor}{#2})}}

\newcommand{\interval}[6][black]{\draw[Bracket-Bracket,thick,#1] (#3, -#2*.7-1.3) -- node[above=-2pt,black]{#5} node[below=-1pt]{#6} (#4, -#2*.7-1.3);}

\newenvironment{ichart}[3][1]{
	\begin{tikzpicture}[#1]
		\def\yb{-#2}
		\draw[-latex'] ($(-.5, \yb-.7)$) -- ($(#3, \yb-.7) + (.5, 0)$);	
		\foreach \t in {0, ..., #3}{
			\draw[-] ($(\t, \yb-.7)$) -- ($(\t, \yb-.7) + (0, -.1)$);        
		}
	}{\end{tikzpicture}}

\tikzset{
	small vertex/.style={draw,circle,thick,fill=black,minimum size=1.2mm,inner sep=0pt},
	node/.style={draw, circle, fill, minimum size=.1cm,inner sep=0pt},
	directed/.style={->, draw, thick},
	sink/.style={rwthturquoise},
	source/.style={rwthorange},
	bluevertex/.style={circle, fill= rwthblue, inner sep=0pt,minimum size=6pt},
	redvertex/.style={circle, fill= rwthred, inner sep=0pt,minimum size=6pt},
	greenvertex/.style={circle, fill= rwthgreen, inner sep=0pt,minimum size=6pt},
	leader_node/.style={circle, draw=black, fill=leader_color, minimum size=6mm, inner sep=0pt},
	follower_node/.style={circle, draw=black, fill=follower_color, minimum size=6mm, inner sep=0pt},
	tiny vertex/.style={draw,black,circle,inner sep=0pt,minimum size=3pt,fill},
	box/.style={draw,gray,fill=none,rounded corners},
	curly/.style={-,decorate,decoration={snake,amplitude=.6mm,segment length=2mm,post length=0mm}}	
}

\newcommand{\NN}{{\mathbb{N}}}
\newcommand{\ZZ}{{\mathbb{Z}}}

\newcommand{\RR}{{\mathbb{R}}}
\renewcommand{\phi}{\varphi}
\newcommand{\np}{\ensuremath{\mathsf{NP}}\xspace}

\newcommand{\p}{\ensuremath{\mathsf{P}}\xspace}

\newcommand{\forward}{(\( \Rightarrow \))\xspace}
\newcommand{\backward}{(\( \Leftarrow \))\xspace}
\newcommand{\commentt}[1]{}
\newcommand\set[1]{\{#1\}}
\newcommand{\abs}[1]{\left|#1\right|}

\DeclareMathOperator*{\argmax}{\arg\max}
\DeclareMathOperator*{\argmin}{\arg\min}

\newcommand{\sigmap}{\ensuremath{\Sigma_2^\mathsf{p}}\xspace}

\newcommand{\independentset}{{\sc Independent Set}\xspace}
\newcommand{\IS}{{\sc Interval Selection}\xspace}
\newcommand{\intervalselectionshort}{\textsc{ISel}\xspace}

\newcommand{\V}{\ensuremath{V_\ell \ \dot\cup \ V_f}}

\newcommand{\vset}{\ensuremath{X \ \dot\cup \ Y}}
\renewcommand{\i}{i}
\newcommand{\I}{I}
\newcommand{\Iset}{\ensuremath{I_\ell \ \dot\cup \ I_f}}

\newcommand{\bisch}{{\sc Bilevel Interval Selection}\xspace}
\newcommand{\BISch}{{\sc BISel}\xspace}
\newcommand{\BIS}{{\sc Bilevel Independent Set}\xspace}
\newcommand{\bis}{{\sc BIS}\xspace}

\newcommand{\bcnf}{\ensuremath{B_2^{\text{CNF}}}\xspace}
\newcommand{\optimistic}{\ensuremath{\mathsf{o}}\xspace}
\newcommand{\pessimistic}{\ensuremath{\mathsf{p}}\xspace}
\newcommand{\costleader}{\ensuremath{\mathrm{c}}}
\newcommand{\costfollower}{\ensuremath{\mathrm{d}}}
\DeclareMathOperator{\opt}{opt}

\DeclareMathOperator{\sol}{Sol^{ISel}}

\newtheoremstyle{named}{}{}{\itshape}{}{\bfseries}{.}{.5em}{\thmnote{#3's }#1 (#2)}
\theoremstyle{named}

\makeatletter
\newcommand{\problemtitle}[1]{\gdef\@problemtitle{#1}}
\newcommand{\probleminput}[1]{\gdef\@probleminput{#1}}
\newcommand{\problemquestion}[1]{\gdef\@problemquestion{#1}}
\NewEnviron{problemenv}{
	\problemtitle{}\probleminput{}\problemquestion{}
	\BODY
	\par\addvspace{.5\baselineskip}
	\noindent
	\begin{tabularx}{\textwidth}{|@{\hspace{5pt}} l X c}
		\multicolumn{2}{@{}l}{\sc\@problemtitle} \\
		\textbf{Given:} & \@probleminput \\
		\textbf{Task:} & \@problemquestion
	\end{tabularx}
	\par\addvspace{.5\baselineskip}
}
\makeatother
\makeatletter \g@addto@macro\@floatboxreset\centering \makeatother

\usepackage{xcolor}
\definecolor{rwthblue}{RGB}{0,84,159}
\definecolor{rwthlightblue}{RGB}{142,186,229}
\definecolor{rwthred}{RGB}{204,7,30}
\definecolor{rwthgreen}{RGB}{87,171,39}
\definecolor{rwthorange}{RGB}{246,168,0}
\definecolor{rwthmagenta}{RGB}{227,0,102}
\definecolor{rwthyellow}{RGB}{255,237,0}
\definecolor{rwthpetrol}{RGB}{0,97,101}
\definecolor{rwthturquoise}{RGB}{0,152,161}
\definecolor{rwthmay}{RGB}{189,205,0}
\definecolor{rwthbordeaux}{RGB}{161,16,53}
\definecolor{rwthviolet}{RGB}{97,33,88}
\definecolor{rwthpurple}{RGB}{122,111,172}

\colorlet{leader_color}{rwthblue!65} 
\colorlet{follower_color}{rwthred!65} 

\colorlet{leadercolor}{rwthblue}
\colorlet{followercolor}{rwthred}

\title{On the Complexity of Bilevel Independent Set Problem}

\author[1]{Komal Muluk~\orcidlink{0009-0002-3915-7198}}

\affil[1]{Department of Mathematics, TU Dortmund University, Germany
\texttt{komal.muluk@math.tu-dortmund.de}}

\date{\vspace{-2\baselineskip}}

\begin{document}
	
	\maketitle
	
	\begin{abstract}
		We consider a bilevel optimization problem in which the ground set is partitioned between two decision makers, a leader and a follower, whose optimization problems are interleaved. 
		We study the \textsc{Bilevel Independent Set} problem, and its special case, the \textsc{Bilevel Interval Selection} problem, on different variants emerging from a combination of the type of leader's objective function, the type of follower's objective function, and the setting in which the follower reacts, i.e., either optimistically or pessimistically. Here we consider sum and bottleneck type objective functions. We investigate the computational complexity of all these variants for the \textsc{Bilevel Independent Set} problem, and sort them into their respective level of the polynomial hierarchy.	Our results range from $\mathsf{P}$, $\mathsf{NP}$-completeness to $\Sigma_2^\mathsf{p}$-completeness. For the \textsc{Bilevel Interval Selection} problem, we give a dynamic programming algorithm running in time $\mathcal{O}(n^4\log n)$ for the variants in which the leader and the follower have objective functions of the sum type.
	
		\medskip\noindent
		\emph{Keywords:} bilevel optimization, dynamic programming, \np-hardness, \sigmap-hardness, computational complexity
	\end{abstract}

	\section{Introduction}
\label{sec:introduction}

In the recent years, the field of bilevel optimization has garnered numerous attention~\cite{BroMarSav2008, dempe2015bilevel, dempe2020bilevel, woeginger2021trouble, beck2023, grune2025completeness}.
A typical problem in bilevel optimization consist of a system of two optimization problems which bear a hierarchical structure.
One of the problems is nested into the other, that is, one optimization problem resides in the constraint set of the other.
The goal of a bilevel optimization problem is that both decision makers, despite having potentially distinct objective functions, work together to build a feasible solution of the underlying problem, all the while trying to optimize their own objective function.
Commonly, the two decision makers who decide on optimizing the two optimization problems are called \emph{leader} and \emph{follower}, and they are considered to be non-cooperative.
Both decision makers have their own set of decision variables and their own private weight function over all decision variables.
When the set of decision variables controlled by the leader and the follower is completely disjoint, the respective problem is known as a \emph{partitioned-items bilevel optimization problem}~\cite{henke25}.
In this study, we will investigate the partitioned-items bilevel independent set problem, and its special case, the bilevel interval selection problem.
In the \textsc{Independent Set} problem, given a graph $G$ and a weight function $w\colon V(G)\to \RR_+$, the task is to find a maximum weighted set $S\subseteq V(G)$ such that no two vertices in $S$ are adjacent in $G$.
Such a set $S$ is called an \emph{independent set} of the graph $G$.

Even though, bilevel optimization is currently a rapidly growing field, the traces of this concept dates back to 1930 in the duopoly model of von Stackelberg~\cite{Stackelberg1934}.
Jeroslow in 1985, obtained the first complexity results for bilevel optimization problems; his results imply that bilevel linear programs with continuous variables are \np-hard and that bilevel integer linear programs are even \sigmap-hard~\cite{jeroslow1985polynomial}.

The formulation of our bilevel optimization problem is as follows.
Consider a combinatorial optimization problem with the set of feasible solutions $\mathcal{F}\subseteq 2^E$ over a ground set~$E$.
Due to the partitioned-items formulation, $E$ is partitioned into two subsets, $E=E_\ell\ \dot\cup\ E_f$ where~$E_\ell$ is controlled by the leader and $E_f$ is controlled by the follower. 
Additionally, we have weight functions $w_\ell, w_f \colon E\to \RR_+$, $w_\ell$ for the leader and $w_f$ for the follower.
Moreover, we have two objective functions~$\costleader$ and~$\costfollower$ for the leader and the follower, respectively. 
Where, typically, the leader's objective function, $\costleader$, depends on her weight function $w_\ell$, for example $\costleader(L\cup F) = \sum_{v\in L\cup F} w_\ell(v)$, 
whereas the follower's objective function, $\costfollower$, depends on his weight function $w_f$.
Then our bilevel optimization problem is formulated as follows:
\begin{subequations}\label{eq:bo:bilevelproblem}
	\begin{alignat}{4}
		&\max_{L\subseteq E_\ell}\quad &&\costleader(L \cup F)
		\label{eq:bo:1a}\\
		&&&\text{where $F\subseteq E_f$ solves the follower's problem}
		\nonumber\\[1.5ex]
		&&&{\max_{F\subseteq E_f}}^*\quad\costfollower(L\cup F)
		\label{eq:bo:1b}\\
		&&&\phantom{{\max_{F\subseteq E_f}}^*\quad} \text{such that $L\cup F \in \mathcal{F}$}\nonumber
	\end{alignat}
\end{subequations}

The optimization problem mentioned in \eqref{eq:bo:1a} is the \emph{leader's optimization problem} and the one mentioned in \eqref{eq:bo:1b} is the \emph{follower's optimization problem}.
In general, the goal is to solve the leader's optimization problem subject to the condition that the follower chooses his set of variables, i.e., $F\subseteq E_f$, that optimizes his own objective function.
The leader has to choose a set $L\subseteq E_\ell$ which in the end can be augmented by a suitable follower's reaction $F$ such that $L\cup F$ belongs to the family $\mathcal{F}$ of feasible solutions.
An action $L \subseteq E_\ell$ of the leader is called a \emph{leader's feasible action} if there exists a follower's reaction $F$ such that $L\cup F\in \mathcal{F}$.
Given a leader's feasible action $L\subseteq E_\ell$, we denote the set of the follower's suitable reactions to it by $\Omega(L) \coloneqq \{F\subseteq E_f \mid L\cup F\in \mathcal{F}\}$.
Now, given the set~$L$, a follower's reaction $F'\subseteq E_f$ is called a \emph{follower's possible reaction} if $F' \in \argmin_{F\in \Omega(L)} \costfollower(L\cup F)$.
Observe that for a specific leader's action $L$, the follower may have multiple possible reactions. 
In this case, we consider two settings in which the follower reacts.
First, we consider the \emph{optimistic} setting in which the follower always chooses, among his possible reactions the one which benefits the leader the most. 
Second, we consider the \emph{pessimistic} setting, in which he reacts by choosing the set~$F$ which is the worst for the leader.
The asterisk $(*)$ in the follower's problem in \eqref{eq:bo:1b} encodes the follower's behavior, either optimistic $(\optimistic)$ or pessimistic $(\pessimistic)$.
Given a leader's action $L$, we denote the \emph{follower's optimal reaction} to it (according to \optimistic or \pessimistic setting) by~$F_L$.
Note that, in case the follower's optimal reaction is also not unique, all such reactions evaluate to the same leader's objective value and the same follower's objective value. 
So, from the optimization perspective, a specific optimal reaction of the follower does not make any difference, thus without loss of generality, we say \emph{the} follower's optimal reaction.

Furthermore, we consider two types of objective functions, up to optimistic or pessimistic setting, namely the sum type and the bottleneck type. 
For a maximization problem, the leader's sum type function ($\costleader_s$) and bottleneck type function ($\costleader_b$) can be defined as follows:
\begin{equation*}
	\costleader_s(S) = \sum_{v \in S} \quad w_\ell(v) \qquad\text{and}\qquad
	\costleader_b(S) = \min_{v \in S} \quad w_\ell(v),
\end{equation*}
where $S$ is a subset of the ground set~$E$.
We define the follower's objective functions $\costfollower_s(S)$ and $\costfollower_b(S)$ analogously; 
there we use the follower's weight function $w_f$ instead of $w_\ell$.
\begin{equation*}
	\costfollower_s(S) = \sum_{v \in S} \quad w_f(v) \qquad\text{and}\qquad
	\costfollower_b(S) = \min_{v \in S} \quad w_f(v).
\end{equation*}
Throughout this paper, we discuss eight different variants of bilevel problems that emerge from choosing: 
\begin{itemize}
	\item the leader's objective function from $\costleader_s$ or $\costleader_b$,
	\item the follower's objective function from $\costfollower_s$ or $\costfollower_b$, and
	\item the behavior of the follower from $\optimistic$ or $\pessimistic$.
\end{itemize}
For simplicity, we represent each of these variants using a $3$-tuple, each component of the tuple capturing a binary choice from the above three factors.
Thus, we represent a variant as $(\costleader,\costfollower,*)$, wherein $\costleader\in\{\costleader_s,\costleader_b\}$, $\costfollower\in\{\costfollower_s,\costfollower_b\}$, and $*\in \{\optimistic, \pessimistic\}$.

A real-life scenario which involves a partitioned-items bilevel problem can be narrated as follows:
Think of a factory with only one machine and two operators who have completely disjoint skill sets.
There are several jobs (with designated time slots) that can be performed at this machine, but only one job can be run at a time.
Each job has a tight time window in which it can be performed.
Due to the mutually exclusive skill sets of the operators, each job can be performed by either of the operators, but not both.
Each operator benefits differently from a job that gets done (they also benefit from each other's jobs). 
Their individual goal is to optimize their personal total payoff.
In order to schedule the jobs on the machine, a specific, say more senior operator, chooses a set of her jobs that she would like to execute during their respective time slots, so she reserves the machine for those times.
Then the second operator arranges a suitable subset of his jobs that he wants to perform, while making sure that the time windows of his jobs do not clash with the jobs of the other operator.
This is one possible bilevel variant of the \IS problem.
We refer to this problem as simply \bisch (\BISch, as short) and discuss it thoroughly in \Cref{sec:bo:bilevel-interval-scheduling}.
In order to view an instance of this problem, we refer the reader to \Cref{fig:bo:BISche:instance-illustration}.

\subsection{Associated Work}
\label{subsec:bo:realted-work}

In the past few years, numerous articles have been devoted to studying various partitioned-items bilevel optimization problems.	
Starting from the simplest case of the \textsc{Bilevel Selection} problem~\cite{henke2024robust}, the study has expanded to the bilevel versions of more complex problems.
For the underlying \textsc{Selection} problem, given a cost function over a finite set of items and a positive integer $k$, the task is to select exactly $k$ items which minimize the total cost.
In~\cite{henke2024robust}, Henke also deals with the \emph{robust} bilevel version of the problem.

In 2009, Gassner and Klinz studied the partitioned-items \textsc{Bilevel Assignment} problem on the eight variants that we also discuss in this paper~\cite{GasKli2009}. 
They showed \np-hardness for all but one of the variants. 
Later in 2022, Fischer, Muluk, and Woeginger resolved the complexity of last open variant and showed it to be \np-hard as well~\cite{fischer2022note}.

Moreover, the \textsc{Bilevel Knapsack} problem, a generalizations of the \textsc{Bilevel Selection} problem, has been shown to be $\sigmap$-hard for three different variants, one of which is a partitioned-items problem formulation~\cite{caprara2014study}.
The authors of this study also derived some approximation results for the three bilevel knapsack variants studied; they showed two of their variants do not allow polynomial-time approximation algorithms with a worst-case guarantee (unless $\p=\np$), however, for the third variant, they were able to derive a polynomial-time approximation scheme~\cite{caprara2013complexity}. 
Carvalho, Lodi, and Marcotte gave a polynomial-time algorithm to solve a continuous bilevel knapsack problem~\cite{carvalho2018polynomial}, which was further improved by Fischer and Woeginger~\cite{fischer2020faster}.
Yet another study on robust bilevel continuous knapsack problem was done by Buchheim and Henke~\cite{buchheim2022robust}.

Moreover, the partitioned-items version of the \textsc{Bilevel Spanning Tree} problem has been studied in~\cite{buchheim2022complexity};
the authors also considered the eight variants of the problem as we do, and they showed that some variants are \np-hard, while some are in~\p.
Additionally, a few other bilevel versions of the spanning tree problem have been studied in~\cite{shi2019bilevel, shi2023mixed}.
A recent addition to the list is a study on the partitioned-items \textsc{Bilevel Shortest Path} problem, wherein Henke and Wulf completely classified the problem variants into classes of the polynomial hierarchy~\cite{henke25}.

\subsection{Our Contribution}
\label{subsec:bo:contribution}

In this paper, we provide a variety of computational complexity results ranging over polynomial-time algorithms, \np-hardness, and even \sigmap-hardness, for the \bisch (\BISch as short) and \BIS (\bis as short) problem in \Cref{sec:bo:bilevel-interval-scheduling} and \Cref{sec:bo:BIS}, respectively.

An instance of the \IS (\intervalselectionshort) 
problem consists of a set of intervals $I$ on the real line and a weight function $w\colon I\to \RR_+$.
The task is to find a subset $I'\subseteq I$ of pairwise non-intersecting intervals that maximizes the total sum of weights within $I'$.
Moreover, it is well-known that \textsc{Interval Selection} can be modeled solving \textsc{Independent Set} on interval graphs. 
Thus, we first consider the \bisch problem in \Cref{sec:bo:bilevel-interval-scheduling}.
We show that \BISch is tractable when the leader's objective function and the follower's objective function are of sum type, in both the optimistic and the pessimistic setting.
We give a dynamic programming algorithm that runs in $\mathcal{O}(n^4\log n)$ time, where $n$ is the total number of intervals in the given instance.

For the \BIS problem, however, we start with a general study on simple undirected graphs.
We get the results mentioned in \Cref{results:overview} for all eight variants that emerge from considering $(\costleader_s/\costleader_b, \costfollower_s/\costfollower_b, \optimistic/\pessimistic)$ on general graphs, as in simple undirected graphs, and on bipartite graphs. 
\begin{table}
	\centering
	\begin{tabular}{l|c|c|c|c}
		\toprule
		Graph class & Leader & Follower & Setting & Result \\
		\midrule
		Interval graphs & $\costleader_s$ & $\costfollower_s$ & $\optimistic/\pessimistic$ & \p(\Cref{thm:bo:BISche:algo-correct})\\
		General & $\costleader_s/\costleader_b$ & $\costfollower_s$ & $\optimistic /\pessimistic$ & $\sigmap$-complete (\Cref{thm:general:sigmap2}) \\
		General & $\costleader_b$ & $\costfollower_b$ & $\pessimistic$ & \np-complete (\Cref{thm:c_bd_bp}) \\
		General & $\costleader_s$ & $\costfollower_b$ & $\optimistic/\pessimistic$ & \np-complete (\Cref{thm:bis:gen:np-hard:c_sd_bp/o}) \\
		General/Bipartite & $\costleader_b$ & $\costfollower_b$ & $\optimistic$ & \p (\Cref{thm:general:P}) \\
		Bipartite/Planar & $\costleader_s$ & $\costfollower_s$ & $\optimistic/\pessimistic$ & \np-complete (\Cref{thm:bipartite:c_sd_so/p})\\
		Bipartite & $\costleader_s$ & $\costfollower_b$ & $\optimistic$ & \p (\Cref{thm:bipartite:c_sd_bo}) \\
		Bipartite & $\costleader_s$ & $\costfollower_b$ & $\pessimistic$ &  \p (\Cref{thm:bipartite:c_sd_bp}) \\
		Bipartite & $\costleader_b$ & $\costfollower_s$ & $\optimistic/\pessimistic$ & \np-complete (\Cref{thm:bipartite:c_bd_bp:c_bd_so/p}) \\
		Bipartite & $\costleader_b$ & $\costfollower_b$ & $\pessimistic$ & \np-complete (\Cref{thm:bipartite:c_bd_bp:c_bd_so/p}) \\
		\bottomrule
	\end{tabular}	
	\caption{Computational complexity of \bis for various cases formed according to the nature of the leader and follower's objective functions and the follower's behavior}
	\label{results:overview}
\end{table}

\section{Preliminaries}
\label{sec:preliminaries}

For a graph $G$, we denote the \emph{vertex set} of the graph by $V(G)$ and the \emph{edge set} by $E(G)$. 
We sometimes omit $G$ and simply write $V$ and $E$ instead.
An edge incident on vertices $u$ and $v$ is represented by $\{u,v\}$.
We represent the \emph{open neighborhood} of a vertex $v\in V(G)$ by $N_G(v)\coloneqq \{u\in V(G) \mid \{u,v\}\in E(G)\}$, whereas the \emph{closed neighborhood} of $v$ by $N_G[v] \coloneqq N(v) \cup \{v\}$.
We denote the \emph{bipartition} of a set $S$ into two sets $A$ and $B$ by $S=A\ \dot\cup\ B$.
The \emph{cardinality} of a set $S$ is denoted by $|S|$.
Further, we denote the set of \emph{natural} numbers (including zero) by $\NN$, the set of \emph{integers} by $\ZZ$, and the set of \emph{real} numbers by $\RR$.
Moreover, $\RR_+$ represent the subsets of non-negative elements of $\RR$.
For $n\in \NN$, define $[n] \coloneqq \{1, 2, \ldots, n\}$.
Given two constants $C_1, C_2\in \RR_+$, we denote $C_1$ is \emph{much greater than} (or \emph{much less than}) $C_2$ with notation $C_1 \gg C_2$ (or $C_1\ll C_2$, respectively).

We use the notation $\leftarrow$ to define an update operation on a set (sometimes also on a graph), for example, in order to update a set $A$ with the addition of an element $b$, we may write, $A\leftarrow A\cup \{b\}$.

	\section{Bilevel Interval Selection}\label{sec:bo:bilevel-interval-scheduling}
In this paper, we study the variants $(\costleader_s, \costfollower_s, \optimistic/\pessimistic)$ of the \BISch problem according to the Problem \eqref{eq:bo:bilevelproblem}, and we refer to its single-level version as the \emph{underlying \IS} problem.
Frank, in 1976, gave a dynamic program for solving \intervalselectionshort optimally~\cite{frank}.
This is useful when we need to use the solution of the \intervalselectionshort problem as a subroutine in our result. 
Let us now define the \bisch(\BISch) problem.

Let $\I = \set{\i_1, \i_2, \ldots, \i_n}$ be a set of intervals in which each interval $\i_j = [a_j, b_j)$ consists of a start point $a_j$ and an end point $b_j$ such that $a_j, b_j \in \RR_+$, and $a_j < b_j$. 
Let $\pi \coloneqq (\i_1, \i_2, \ldots, \i_n)$ be the sequence of intervals in $\I$ according to the non-decreasing ordering of their end points.
So $b_j \leq b_{j'}$ for all $j \leq j'$.
Define $\I_k \coloneqq \{\i_1, \i_2, \ldots, \i_k\}$ for any $k\in[n]$.
The intervals in $I$ are partitioned into two sets, $\I = \Iset$, where $\I_\ell$ is the set of intervals controlled by the leader and $\I_f$ is controlled by the follower.
Both players have their own weight functions $w_\ell, w_f \colon \I \to \RR_+$.
The objective functions of both the leader and the follower are of sum type and they are $\costleader_s(L \cup F) = \sum_{\i\in L \cup F} w_\ell(\i)$ and $\costfollower_s(L \cup F) = \sum_{\i\in L \cup F} w_f(\i)$, respectively.
The ultimate task is for the leader to choose her subset $L \subseteq \I_\ell$ of pairwise-disjoint intervals, that the follower later augments with his chosen subset $F \subseteq \I_f$ of additional pairwise-disjoint intervals---all the while trying to optimize his own objective function $\costfollower_s(L \cup F )$---such that, in the end, the leader achieves the best possible weight for her objective function, i.e., $\costleader_s(L \cup F)$ is maximized. 
Here, we consider the variants $(\costleader_s, \costfollower_s, \optimistic/\pessimistic)$ of the \bisch(\BISch) problem.

\begin{subequations}\label{prob:bo:is}
	\begin{alignat}{4}
		&\max_{L\subseteq \I_\ell}\quad && \sum_{\i\in L \cup F} w_\ell(\i)
		\label{prob:bo:is:1a}\\
		&&&\text{where $F\subseteq \I_f$ solves the follower's problem}
		\nonumber\\[1.5ex]
		&&&{\max_{F\subseteq \I_f}}^*\quad \sum_{\i\in L \cup F} w_f(\i)
		\label{prob:bo:is:1b}\\
		&&&\phantom{{\max_{F\subseteq \I_f}}^*\quad} \text{such that $L\cup F$ is a set of pairwise-disjoint intervals in $\I$,}\nonumber
	\end{alignat}
\end{subequations}
where $*\in \{\optimistic,\pessimistic\}$, and thus encodes the behavior or the follower, either optimistic or pessimistic.
Refer to \Cref{fig:bo:BISche:instance-illustration} for an illustration of an instance of the \BISch problem.
\begin{figure}[t]
	\begin{center}
		\begin{ichart}[xscale=0.85]{4}{15}
			\interval[followercolor]{2}{0}{1}{\w{9}{3}}{$i_1$}
			\interval[leadercolor]{1}{2}{4}{\w{1}{5}}{$i_2$}
			\interval[followercolor]{2}{3}{4.55}{\w{2}{2}}{$i_3$}
			\interval[followercolor]{3}{0}{5}{\w{5}{4}}{$i_4$}
			\interval[followercolor]{4}{4}{6}{\w{3}{1}}{$i_5$}
			\interval[leadercolor]{3}{5.5}{7}{\w{4}{8}}{$i_6$}
			\interval[leadercolor]{4}{7}{9}{\w{10}{2}}{$i_7$}
			\interval[followercolor]{3}{8}{10}{\w{8}{9}}{$i_8$}
			\interval[followercolor]{1}{7}{11}{\w{2}{5}}{$i_9$}
			\interval[leadercolor]{4}{11}{12}{\w{0}{10}}{$i_{10}$}
			\interval[leadercolor]{2}{11}{14}{\w{3}{4}}{$i_{11}$}
			\interval[followercolor]{3}{13}{15}{\w{7}{2}}{$i_{12}$}
		\end{ichart}
	\end{center}
	\caption{Example of an instance of the \BISch problem. 
		Intervals controlled by the leader and the follower are drawn in blue and red colors, respectively.
		The corresponding leader's and follower's weights for each interval are also shown in blue and red colors, respectively.
	}
	\label{fig:bo:BISche:instance-illustration}
\end{figure}

Define $p(k)$ to be the largest possible index smaller than $k$ such that, for the pair $i_k$ and $i_{p(k)}$ of intervals, we have $b_{p(k)} \leq a_k$. 
That means $i_{p(k)}$ is the last interval in the ordering which completely ends before the interval $i_k$ starts.
If no such interval exists, then set $p(k)\coloneqq 0$.

\subsection{A Dynamic Program}
\label{subsec:bo:IS:dp}

We devise a dynamic programming algorithm to solve the \BISch problem for the $(\costleader_s, \costfollower_s, \optimistic/\pessimistic)$ variants.
Later in \Cref{lem:bo:BISche:o/p-follower}, we prove that one can incorporate the follower's behavior, either optimistic or pessimistic, in the follower's weight function; In \Cref{lem:bo:BISche:o/p-follower}, we provide strategies to update $w_f$ for this purpose.
Thus throughout this section, we assume that, the follower always chooses according to his behavior, either optimistic or pessimistic.

\subsubsection{Description of the Algorithm}\label{alg:dp:BilevelIntervalScheduling}

We begin by giving a recursive formulation to calculate the optimal objective value of the leader in an instance restricted to the intervals $\I_k = \{\i_1, \i_2, \ldots, \i_k\}$. 
Let us define $\opt[k]$
as the maximum value that the leader can achieve on an instance of interval set~$\I_k$.
Assume that we have precalculated all $\opt[k']$ for all $k' < k$, then we show that we can calculate the value $\opt[k]$ in polynomial time using the precalculated values.
Let $i_0\coloneqq [-\delta,0)$ for a very small value $\delta > 0$ with weights $w_\ell(i_0)\coloneqq 0$ and $w_f(i_0)\coloneqq 0$.
We define our base case as:	$\opt[0] \coloneqq 0$.

Further, we claim that $\opt[k]$ satisfies the following recurrence relation.
\[
\opt[k] = \begin{cases}
	\max \Big \{ w_\ell(\i_k) +\opt[p(k)],\ \opt[k-1] \Big \} & \quad \text{if } \i_k \in \I_\ell,\\
	\max_{\substack{j < k \\ \i_j \in \I_\ell\cup\{i_0\}}} \left\{\opt[p(j)] + w_\ell(\i_j) +  \sum_{\i \in \sol(\I_f^{j,k}, w_f)} w_\ell(\i) \right\} & \quad \text{if } \i_k \in \I_f,\\
\end{cases}
\]
where, $\I_f^{j,k}\coloneqq \set{\i \in \I_f \mid \i \in \I_k \setminus \I_{j} \text{ and } \i \cap \i_j = \emptyset}$, and the set $\sol(\I_f^{j,k}, w_f)$ is a maximum weighted subset of pairwise-disjoint intervals in $\I_f^{j,k}$ according to the follower's weight function~$w_f$, which can be computed using Frank's dynamic program~\cite{frank}.
We now prove the correctness of our recurrence formulation for $\opt[k]$.

\subsubsection{Analysis of the Algorithm}
\begin{lemma}\label{lem:bo:BISche:reccurence-correctness}
	The function $\opt[k]$ describes the leader's optimal objective value when only the subset of intervals $\set{\i_1, \i_2, \ldots, \i_k}$ is considered.
\end{lemma}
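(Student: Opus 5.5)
We argue by strong induction on $k$, with base case $\opt[0]=0$ (the empty instance gives the leader nothing). Throughout we assume, as announced before the algorithm, that ties in the follower's problem have been broken by perturbing $w_f$ according to \Cref{lem:bo:BISche:o/p-follower}. Then for every leader action $L$ the follower's reaction $F_L$ is determined, and its leader value is well defined. For the instance restricted to $\I_k$, fix an optimal leader action $L^*$ with reaction $F^*=F_{L^*}$. We split into the two cases of the recurrence, according to whether $\i_k$ is a leader interval or a follower interval. In each case we show that the recurrence value is both attainable (lower bound) and not exceeded by $\costleader_s(L^*\cup F^*)$ (upper bound).

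\textbf{Case $\i_k\in\I_\ell$.} Suppose $\i_k\notin L^*$. Then $\i_k$ is not available to the follower, so the instance on $\I_k$ behaves exactly like the instance on $\I_{k-1}$, and the value is $\opt[k-1]$. Suppose instead $\i_k\in L^*$. Since $\i_k$ has the largest end point, every interval with index greater than $p(k)$ intersects $\i_k$ (its end point is at least $b_k>a_k$ and it does not end before $a_k$). Hence all other chosen intervals, of both players, lie in $\I_{p(k)}$. The follower's problem decomposes: on $\I_{p(k)}$ he reacts to $L^*\setminus\{\i_k\}$ exactly as in the subinstance $\I_{p(k)}$, because the discarded intervals are blocked anyway. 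This gives value $w_\ell(\i_k)+\opt[p(k)]$. Conversely, any optimal action for $\I_{p(k)}$ extended by $\i_k$ achieves this value, and any action for $\I_{k-1}$ achieves $\opt[k-1]$ in $\I_k$. Together these give equality with the maximum.

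\textbf{Case $\i_k\in\I_f$.} Let $\i_j$ be the leader interval in $L^*$ with the largest index, or $\i_0$ if $L^*=\emptyset$. Every interval with index in $(p(j),j)$ intersects $\i_j$, so the leader's choices other than $\i_j$ lie in $\I_{p(j)}$. The follower's region now splits into two independent parts. The first is the intervals of index at most $j$ that are disjoint from $\i_j$; these are exactly those in $\I_{p(j)}$, since any follower interval of index between $p(j)$ and $j$ must intersect $\i_j$. The second is the set $\I_f^{j,k}$. No interval of $\I_f^{j,k}$ meets an interval of $\I_{p(j)}$ without also meeting $\i_j$? This is the delicate point, and it fails in general: an interval of $\I_f^{j,k}$ disjoint from $\i_j$ could lie to the left of $\i_j$, starting before $a_j$. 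So I would instead show that any such interval lying left of $\i_j$ in fact belongs to $\I_{p(j)}$ or overlaps it in a controlled way. If that fails, I would redefine the split so that only intervals lying to the right of $b_j$ are counted in the second part. The remaining intervals have to be accounted for by the subinstance $\I_{p(j)}$.

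Granting this decomposition, the follower's optimal response is the union of his optimal response in the subinstance $\I_{p(j)}$ and $\sol(\I_f^{j,k},w_f)$. This uses that a sum objective over a disjoint union of independent regions is maximized componentwise, together with the uniqueness of optima after the perturbation. The leader's value is therefore $\opt[p(j)]+w_\ell(\i_j)+\sum_{\i\in\sol(\I_f^{j,k},w_f)}w_\ell(\i)$, which gives the upper bound. For attainability, take for each $j$ an optimal action on $\I_{p(j)}$ and add $\i_j$.

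The main obstacle is exactly this independence or decomposition claim in the follower case. It needs precise index bookkeeping showing that the leader's choices in $\I_{p(j)}$ do not affect which follower intervals of $\I_f^{j,k}$ are available, and vice versa. I would first verify it directly using the end-point ordering and the definition of $p(\cdot)$.
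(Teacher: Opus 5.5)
Your skeleton is the paper's: split on whether $i_k$ is a leader or a follower interval, and in the follower case condition on the last leader interval $i_j$ of an optimal action and decompose into $I_{p(j)}$, $\{i_j\}$ and $I_f^{j,k}$. The gap is at the step that carries the whole follower case. You never prove the decomposition; you only ``grant'' it, and the reason you give for doubting it is false. An interval $[a,b)$ of $I_f^{j,k}$ cannot lie to the left of $i_j$. It has index $>j$, so by the end-point ordering $b\ge b_j>a_j$, and disjointness from $[a_j,b_j)$ then forces $a\ge b_j$. (An interval starting before $a_j$ and ending at or after $b_j$ contains $i_j$.) On the other side, every interval of index at most $p(j)$ ends at $b_{j'}\le b_{p(j)}\le a_j$. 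So $i_j$ separates the two groups completely. Your fallback ``count only intervals to the right of $b_j$'' is therefore not a redefinition: it describes $I_f^{j,k}$ exactly. The vaguer alternative (``overlaps in a controlled way'') is unnecessary.

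With this separation the missing step goes through. Given $L$ with last leader interval $i_j$, the leader's other intervals lie in $I_{p(j)}$ and so block nothing in $I_f^{j,k}$. Follower intervals with index in $(p(j),j)$ all meet $i_j$ and are blocked. The follower's feasible region is thus the union of two parts: the follower intervals of $I_{p(j)}$ avoiding $L\setminus\{i_j\}$, and $I_f^{j,k}$. No interval of one part meets an interval of the other. His (perturbed) sum objective is therefore optimized independently on the two parts, so his reaction is his reaction to $L\setminus\{i_j\}$ in the subinstance $I_{p(j)}$ together with $\sol(I_f^{j,k},w_f)$. This gives both your upper bound and attainability, so state and prove this separation rather than assume it.

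A smaller slip: in the case $i_k\in I_\ell$, intervals of index in $(p(k),k)$ have end point in $(a_k,b_k]$, not ``at least $b_k$''. They meet $i_k$ because $b_{j'}>a_k$ by maximality of $p(k)$ and $a_{j'}<b_{j'}\le b_k$.
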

\begin{proof}
	When $k=0$, there are no intervals to choose from. 
	Thus each set of pairwise-disjoint intervals is an empty set and the leader's objective function evaluates to $0$. 
	As we defined $\opt[0]\coloneqq 0$, this case holds.
	Otherwise $k\neq 0$, in that case either $i_k\in \I_\ell$ or $i_k\in \I_f$.
	
	First, consider the case when $i_k\in \I_\ell$.
	Let $I^\star\subseteq \{i_1, i_2, \dots, i_k\}$ be a set of pairwise-disjoint intervals which optimizes the leader's objective function in the \BISch problem.
	Thus $I^\star$ consists of the leader's action $L$ along with the follower's optimal reaction $F_L$ to $L$.	
	For such an $I^\star\subseteq \{i_1, i_2, \dots, i_k\}$, 
	either $i_k\in I^\star$ or $i_k\notin I^\star$.
	(i) If $i_k\in I^\star$, then the rest of $I^\star$ satisfies $I^\star\setminus \{i_k\} \subseteq \{i_1, i_2, \dots, \i_{p(k)}\}$ as $i_j\cap i_k \neq \emptyset$ for all $p(k) < j \leq k$.
	Recall that $\{i_1, i_2, \dots, \i_{p(k)}\} \eqqcolon \I_{p(k)}$.
	The set $I^\star\setminus \{i_k\}$ maximizes the leader's objective function over the set of intervals $\I_{p(k)}$, because if not, then there is yet another set $I'\subseteq\I_{p(k)}$ which maximizes the leader's objective function over $\I_{p(k)}$. 
	Then we can construct a set of pairwise-disjoint intervals $\I'\cup \{i_k\}$, which returns a leader's objective value greater than that of $I^\star$.
	Since the set $\I'\cup \{i_k\}$ is also a feasible solution to the \BISch problem over the interval set~$I_k$, this contradicts the optimality of $\I^\star$.
	Thus, we conclude that $I^\star\setminus \{i_k\}$ maximizes the leader's objective function over the set of intervals $\I_{p(k)}$
	which is captured in $\opt[p(k)]$.
	So, $\sum_{i\in \I^\star} w_\ell(i) = \opt[p(k)]+w_\ell(i_k)$.
	(ii) If $i_k\notin I^\star$, then $I^\star\subseteq \I_{k-1}$.
	We claim that $\I^\star$ also maximizes the leader's objective function over $\I_{k-1}$.
	If not, then a subset $I''\subseteq I_{k-1}$, $\I^\star\neq I''$ that optimizes the leader's objective function in $I_{k-1}$ can replace $I^\star$ to return a better leader's valuation also for set~$I_k$, which again contradicts the optimality of $I^\star$.
	Thus, we conclude that $\I^\star$ maximizes the leader's objective function over $\I_{k-1}$.
	Consequently, the leader's value obtained by $I^\star$ can be captured by $\opt[k-1]$.
	Thus in this case, $\sum_{i\in \I^\star} w_\ell(i) = \opt[k-1]$.
	As a result, if $i_k\in I_\ell$, then 
	\[
	\opt[k] = \max \Big \{ w_\ell(\i_k) +\opt[p(k)],\ \opt[k-1] \Big \}.
	\]		
	Now, consider the remaining case when $i_k\in \I_f$.
	Once more, let $I^\star\subseteq \{i_1, i_2, \dots, i_k\}$ be a set of pairwise-disjoint intervals which optimizes the leader's objective function.
	Since $i_k\notin I_\ell$, consider the largest index $j<k$, such that $i_j\in I^\star\cap I_\ell$.
	If no such $j$ exists, then none of the leader's intervals are a part of the optimal solution $I^\star$, and we assume $j=0$ with $w_\ell(i_0)\coloneqq 0$, $w_f(i_0)\coloneqq 0$, $\opt[0] \coloneqq 0$, and $p(0)\coloneqq 0$.
	Since, $I^\star \ni i_j$, we claim that $I^\star \subseteq I_{p(j)} \cup \{i_j\} \cup I^{j,k}_f$, where $I_{p(j)} \coloneqq \{i_1, i_2, \dots, i_{p(j)}\}$ and $\I_f^{j,k}\coloneqq \set{\i \in \I_f \mid \i \in \I_k \setminus \I_{j} \text{ and } \i \cap \i_j = \emptyset}$ (also, define $\I_f^{0,k}\coloneqq \I_f\cap \I_k$).
	
	The condition $I^\star \subseteq I_{p(j)} \cup \{i_j\} \cup I^{j,k}_f$ holds because, if $j$ is the largest index such that $i_j\in \I^\star\cap I_\ell$, then no other interval that intersects $i_j$ belongs to $\I^\star$.
	Moreover, each $i_{j'} \in I^\star$ with $j'>j$ also belongs to the set~$I_f$.
	Note that the three sets $I_{p(j)}, \{i_j\}$, and $I^{j,k}_f$ are pairwise-disjoint, in the sense that they do not share common intervals. 
	For an illustration, see \Cref{fig:bo:BISche:partitioning-of-intervals}. 
	\begin{figure}[t]
		\begin{center}
			\begin{ichart}[xscale=0.82,yscale=.75]{5.5}{16}
				\interval[followercolor]{5}{0}{5}{}{}
				\interval[followercolor]{4}{0}{1.5}{}{}
				\interval[followercolor]{4}{2}{3}{}{}
				\interval[followercolor]{2}{3}{5}{}{}
				
				\interval[followercolor]{6}{5}{7}{}{}
				\interval[followercolor]{6}{8}{9.5}{}{}
				\interval[followercolor]{2}{7}{8}{}{}
				\interval[followercolor]{1}{7.5}{13}{}{}
				
				\interval[followercolor]{2}{11}{14}{}{}
				\interval[followercolor]{4}{10.5}{11.5}{}{}
				\interval[followercolor]{4}{13}{16}{}{$i_k$}
				\interval[followercolor]{5}{11.5}{13}{}{}
				
				\interval[leadercolor]{6}{.5}{3}{}{}
				\interval[leadercolor]{4}{4}{5.5}{}{$i_{p(j)}$}
				\interval[leadercolor]{2}{.5}{2}{}{}
				\interval[leadercolor]{1}{1.5}{3.5}{}{}
				
				\interval[leadercolor]{1}{4.5}{7}{}{}
				\interval[leadercolor]{3}{5.5}{7}{}{}
				\interval[leadercolor]{4}{6}{8.5}{}{$i_j$}
				\interval[leadercolor]{5}{8}{10}{}{}
				
				\interval[leadercolor]{3}{10}{11}{}{}
				\interval[leadercolor]{3}{11.5}{14}{}{}
				\interval[leadercolor]{6}{12}{14}{}{}
				
				\tspan{0}{4}{$I_{p(j)}$}
				\draw[dashed,gray,rounded corners] (4, -1.6) -- (4, -2.4) -- (5.2, -2.4) -- (5.2, -3.8) -- (5.7, -3.8) -- (5.7, -5.2) -- (4.5, -5.2) -- (4.5, -6);
				\draw[dashed,gray,rounded corners] (13.2, -1.6) -- (13.2, -2.4) -- (9, -2.4) -- (9, -4.5) -- (10.5, -4.5) -- (10.5, -6);
				\tspan{13.2}{16.5}{only red intervals\\form the set $I_f^{j,k}$\\[.1em]\ }
			\end{ichart}
		\end{center}
		\caption{When $i_k\in I_f$: partitioning of the intervals across an interval $i_j\in I_\ell$. 
			When $i_j$ belongs to the leader's action $L\subseteq I_\ell$, the sets $I_{p(j)}$ and $I^{j,k}_f$ are completely disjoint. The intervals from $I_{p(j)}$ do not intersect with $i_j$, nor do they intersect with any interval in $I^{j,k}_f$. Similarly, $i_j$ does not intersect with any interval from $I^{j,k}_f$.}
		\label{fig:bo:BISche:partitioning-of-intervals}
	\end{figure}
	Additionally, each pair of intervals $i$ and $i'$ belonging to two distinct sets from $I_{p(j)}, \{i_j\}$, and $I^{j,k}_f$ is non-intersecting.
	Now, we claim that the following hold:
	\begin{enumerate}[label=(\alph*)]
		\item $\sum_{i\in I^\star\cap I_{p(j)}} w_\ell(i) = \opt[p(j)]$,\label{claim:a}
		\item $\sum_{i\in I^\star\cap \{i_j\}} w_\ell(i) = w_\ell(i_j)$, and\label{claim:b}
		\item $\sum_{i\in I^\star\cap I^{j,k}_f} w_\ell(i) = \sum_{i\in \sol(I^{j,k}_f, w_f)} w_\ell(i)$,\label{claim:c}
	\end{enumerate}
	where $\sol(\I_f^{j,k}, w_f)$ is the solution of \intervalselectionshort problem on $\I_f^{j,k}$ according to function $w_f$.
	
	Let us start with claim \ref{claim:a}.
	Note that the set $I^\star\cap I_{p(j)}$ can be built by the leader-follower interplay of the \BISch problem.
	Assume that claim \ref{claim:a} does not hold.
	Then either $\sum_{i\in I^\star\cap I_{p(j)}} w_\ell(i) < \opt[p(j)]$ or $\sum_{i\in I^\star\cap I_{p(j)}} w_\ell(i) > \opt[p(j)]$.
	For the former case, consider $(I^\star\setminus I_{p(j)}) \cup I_{\opt[p(j)]}$, where $I_{\opt[p(j)]}$ is the subset of pairwise-disjoint intervals in $I_{p(j)}$ that corresponds to the optimal value $\opt[p(j)]$.
	The set $(I^\star\setminus I_{p(j)}) \cup I_{\opt[p(j)]}$ is also a feasible solution for the \BISch problem on the interval set~$I_k$. 
	Moreover, 
	\[
	\sum_{i\in (I^\star\setminus I_{p(j)}) \cup I_{\opt[p(j)]}} w_\ell(i)~ > ~\sum_{i\in I^\star} w_\ell(i),
	\]
	as $\sum_{i\in I^\star\cap I_{p(j)}} w_\ell(i) < \opt[p(j)]$.
	This contradicts the optimality of $I^\star$.
	
	As for the latter case, if $\sum_{i\in I^\star\cap I_{p(j)}} w_\ell(i) > \opt[p(j)]$, 
	this contradicts with the definition of $\opt[k]$ because  $\opt[p(j)]$ is the maximum value that the leader can achieve on the interval set~$I_{p(j)}$.
	Thus, we conclude that \ref{claim:a} always holds.
	
	For claim \ref{claim:b}, note that the chosen index $j<k$ is the last index that satisfies $i_j\in I^\star\cap I_\ell$. 
	Indeed, since $i_j\in I^\star$, \ref{claim:b} holds trivially.
	
	Now, we discuss claim \ref{claim:c}.
	Since the leader's feasible action $(I^\star\cap I_\ell) \subseteq I^\star\setminus I^{j,k}_f$, and $i_j\in I^\star\cap I_\ell$, none of the intervals in $I^{j,k}_f$ overlaps with an interval from the leader's action $I^\star\cap I_\ell$.
	Additionally, as we noticed earlier, the sets $I^{j,k}_f$ and $I_{p(j)}$ are disjoint, and the range of their intervals is non-overlapping.
	This allows the follower to solve his problem independently on each of these sets.		
	Altogether, the follower's optimal reaction to $I^\star \cap I_\ell$ includes solving for the max-weighted subset of intervals, according to the weight function $w_f$, in $I^{j,k}_f$ and $I_{p(j)}$ which do not overlap with intervals in $I^\star \cap I_\ell$.
	To obtain such a subset of intervals over~$I^{j,k}_f$, the follower simply solves the underlying \IS problem on an instance with interval set~$I^{j,k}_f$ and weight function $w_f$.
	Thus, the leader's objective function in $I^\star \cap I^{j,k}_f$ must evaluate to the value $\sum_{i\in \sol(I^{j,k}_f, w_f)} w_\ell(i)$.
	Thus, claim \ref{claim:c} holds.
	
	As a result of \ref{claim:a}, \ref{claim:b}, and \ref{claim:c}, when $i_k\in I_f$, we enumerate over all $j<k$ such that $i_j\in I_\ell$ can be the last interval of the leader's action in set~$I_k$.
	Then the optimal solution of \BISch on interval set~$I_k$ lies within one of these cases.
	As a result, when $i_k\in I_f$, we calculate the value of $\opt[k]$ by using the following recursive formula:
	\[
		\opt[k] = \max_{\substack{j < k \\ \i_j \in \I_\ell}} \left\{\opt[p(j)] + w_\ell(\i_j) +  	\sum_{\i \in \sol(\I_f^{j,k}, w_f)} w_\ell(\i) \right\}. \label{eq:interval} \qedhere
	\] 
\end{proof}

\begin{restatable}[]{lemma}{ISchOptimisticPessimistic}
	\label{lem:bo:BISche:o/p-follower}
	Given a leader's action $L\subseteq I_\ell$, the follower's optimal reaction $F_L\subseteq I_f$ can be calculated in polynomial time both for the optimistic and the pessimistic behavior of the follower. 
\end{restatable}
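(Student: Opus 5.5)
Given $L$, I first delete from $I_f$ every interval that intersects some interval of $L$. Call the remaining set $I_f^L$. Then $\Omega(L)$ is exactly the family of pairwise-disjoint subsets of $I_f^L$. The follower's objective is $\costfollower_s(L\cup F)=\costfollower_s(L)+\sum_{i\in F}w_f(i)$, where the first term is a constant. So his possible reactions are precisely the maximum-$w_f$-weight sets of pairwise-disjoint intervals in $I_f^L$. Among these, the optimistic follower maximizes $\sum_{i\in F}w_\ell(i)$ and the pessimistic follower minimizes it. The task is therefore a lexicographic interval selection: first optimize $w_f$, then break ties by $\pm w_\ell$.

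I will handle the tie-breaking by perturbing the weights, as hinted before \Cref{alg:dp:BilevelIntervalScheduling}. Define
\[
w_f'(i)\coloneqq M\,w_f(i)+w_\ell(i)\ \ (\optimistic), \qquad w_f'(i)\coloneqq M\,w_f(i)-w_\ell(i)+C\ \ (\pessimistic).
\]
Choose $M$ large enough that any strict difference in $w_f$-value between two feasible sets outweighs every possible difference in $w_\ell$-value. The main obstacle is that the weights are real rather than integer, so I cannot simply take $M>\sum_i w_\ell(i)$ divided by a unit gap.

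To deal with this, I will avoid an explicit $M$ and instead run Frank's dynamic program on pairs $(w_f(F),\pm w_\ell(F))$ compared lexicographically. This is valid because the DP uses only addition and max, and lexicographic order on $\RR^2$ is compatible with addition. Equivalently, one can set $M\coloneqq (\sum_i w_\ell(i)+1)/\gamma$, where $\gamma$ is the minimum positive gap between $w_f$-values of feasible sets. Computing $\gamma$ is awkward, which is why I prefer the lexicographic DP.

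For the pessimistic case, the additive constant $C$ is not actually needed under lexicographic comparison. Every feasible $F$ consists of disjoint intervals, and maximizing $(w_f(F),-w_\ell(F))$ lexicographically is exactly the pessimistic rule. Nonnegativity of weights does not matter to the DP.

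Correctness then takes two short steps. First, a lexicographic maximizer over $\Omega(L)$ maximizes $w_f$, so it is a possible reaction. Second, among possible reactions it is best (respectively worst) for the leader, so it equals $F_L$. The running time is the sorting plus Frank's $\mathcal{O}(n\log n)$ DP on $I_f^L$, preceded by an $\mathcal{O}(n^2)$ filtering step. This also justifies treating the follower's behavior as built into $w_f$ in $\sol(\cdot,w_f)$ earlier.
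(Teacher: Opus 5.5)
Your proposal is correct and follows the paper's idea: the paper perturbs $w_f$ to $w_f\pm\epsilon\, w_\ell$ with a ``very small'' $\epsilon>0$ (your $M=1/\epsilon$), and running Frank's DP on lexicographically ordered pairs is this same perturbation with $\epsilon$ treated symbolically, which has the advantage of not needing an explicit $\epsilon$ (any $\epsilon<\gamma/\sum_i w_\ell(i)$ would do, but the paper never says how small is small enough). One passing remark should be sharpened: the per-interval constant $C$ in your pessimistic weight is not merely unnecessary but wrong, since it adds $C|F|$ to a set's value and can let a larger set with larger $w_\ell$-weight win a $w_f$-tie, so your final lexicographic version must drop it, as it does.
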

\begin{proof}
	In order to capture the essence of the optimistic or pessimistic behavior of the follower in \eqref{prob:bo:is}, we update the follower's weight function $w_f$ according to the following rules. For all $i\in I$, we do:
	\begin{itemize}
		\item if the follower behaves optimistically, update,
		\[
		w_f(i) \leftarrow w_f(i) + \epsilon\cdot w_\ell(i),
		\]
		\item if the follower behaves pessimistically, update,
		\[
		w_f(i) \leftarrow w_f(i) - \epsilon\cdot w_\ell(i),
		\]
	\end{itemize}
	where $\epsilon>0$ is a very small number which slightly modifies the follower's weight function to capture the behavior of the follower.
	
	Note that, for any subset $I'\subseteq I$, 
	\[
	\sum_{i\in I'} w_f(i)\leftarrow \sum_{i\in I'} \big(w_f(i) \pm \epsilon\cdot w_\ell(i) \big).
	\]
	Since $\epsilon$ is a very small number, 
	\[
	\argmax_{I'\subseteq I} ~ \sum_{i\in I'} \big(w_f(i) \pm \epsilon\cdot w_\ell(i) \big) \subseteq \argmax_{I'\subseteq I} ~ \sum_{i\in I'} w_f(i).
	\]
	So, the follower indeed always chooses a set $F\subseteq I_f$ which optimizes his objective function.
	Moreover, due to the additional $\pm \sum_{i\in I'} \epsilon\cdot w_\ell(i)$ term---whenever the follower is faced with multiple possible reactions to $L$---he additionally maximizes $ \sum_{i\in I'} \epsilon\cdot w_\ell(i)$ in the optimistic setting and minimizes $\sum_{i\in I'} \epsilon\cdot w_\ell(i)$ in the pessimistic setting.
	That means, when the follower has multiple possible reactions, in the optimistic setting, he acts in favor of the leader and in the pessimistic setting he acts against the interest of the leader.
\end{proof}

\begin{theorem}\label{thm:bo:BISche:algo-correct}
	The algorithm described in \Cref{alg:dp:BilevelIntervalScheduling} returns the optimal value for the \BISch problem given in \eqref{prob:bo:is} and it runs in polynomial time in the size of the input instance.
\end{theorem}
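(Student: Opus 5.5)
The proof has two parts: correctness and running time. Both rest on \Cref{lem:bo:BISche:reccurence-correctness} and \Cref{lem:bo:BISche:o/p-follower}.

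\emph{Correctness.} First I apply \Cref{lem:bo:BISche:o/p-follower}. For a sufficiently small $\epsilon>0$, I replace $w_f$ by $w_f\pm\epsilon w_\ell$, so that the follower's optimal reaction becomes the unique relevant maximiser. Ties between reactions are then resolved in the leader's favour in the optimistic setting and against her in the pessimistic setting. Concretely, $\epsilon$ can be chosen smaller than the minimum positive gap between distinct follower values divided by $\sum_i w_\ell(i)$. With this modified weight function, the value computed by Frank's dynamic program on $\I_f^{j,k}$ is exactly the follower's reaction on that region, and it is the reaction that the recurrence uses. \Cref{lem:bo:BISche:reccurence-correctness} then shows that $\opt[k]$ equals the leader's optimal value on $\I_k$. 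Filling the table for $k=0,1,\dots,n$ in increasing order is well defined, because each entry refers only to indices $p(k)<k$, $k-1$, and $p(j)<j<k$. Hence $\opt[n]$ is the optimal value of \eqref{prob:bo:is}.

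One point needs care. The recurrence in the case $i_k\in\I_f$ also depends on $\I_f^{j,k}$ containing every follower interval after $i_j$. Intervals ending before $i_j$ that intersect $i_j$ are blocked anyway, and intervals in $\I_{p(j)}$ are handled recursively. I would check this matches the partition shown in \Cref{fig:bo:BISche:partitioning-of-intervals} and simply cite the lemma.

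\emph{Running time.} Sorting the intervals by end point and computing all $p(k)$ by binary search takes $\mathcal{O}(n\log n)$. For each leader-controlled case, the entry is computed in $\mathcal{O}(1)$. For each follower-controlled $k$, the recurrence ranges over $\mathcal{O}(n)$ choices of $j$. For each pair $(j,k)$, I build $\I_f^{j,k}$ in $\mathcal{O}(n)$ time and solve \IS on it with Frank's algorithm in $\mathcal{O}(n\log n)$ time, then sum the leader weights of the returned set. Over $\mathcal{O}(n^2)$ pairs this gives $\mathcal{O}(n^3\log n)$, which is at most the claimed $\mathcal{O}(n^4\log n)$. The larger bound leaves room for recomputing sorted orders or $p$-values per subinstance. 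Handling $\epsilon$ symbolically, by comparing lexicographic pairs $(w_f, \pm w_\ell)$, keeps the arithmetic polynomial.

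\emph{Main obstacle.} The delicate step is justifying that the follower's globally optimal reaction to the leader's action restricts to an optimal solution on $\I_f^{j,k}$. I would handle this by noting that the leader's chosen intervals split the line into independent regions, so the follower's problem decomposes into independent regions. With the perturbed weights, the optimal reaction on each region is unique up to equal $(w_f,w_\ell)$ values. Therefore any maximiser returned by Frank's algorithm yields the same leader value.
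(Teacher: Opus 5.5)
Your proposal is correct and follows the paper's proof. Correctness is reduced to \Cref{lem:bo:BISche:reccurence-correctness}, with the $\epsilon$-perturbation of \Cref{lem:bo:BISche:o/p-follower} fixing the follower's tie-breaking, plus bottom-up evaluation of the table; the running time counts the $\mathcal{O}(n^2)$ pairs $(j,k)$ times one call to Frank's dynamic program, where your per-pair $\mathcal{O}(n\log n)$ cost is sharper than the paper's $\mathcal{O}(n^2\log n)$ and gives $\mathcal{O}(n^3\log n)$ overall, still within the claimed bound.
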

\begin{proof}
	We gave a recurrence formula to calculate the optimal objective value of the \BISch problem when the instance consists of intervals only within the set $I_k = \{i_1, i_2, \dots, i_k\}$.
	We store this optimal value in $\opt[k]$.
	As shown in \Cref{lem:bo:BISche:reccurence-correctness}, our recurrence relation is correct, and it indeed maintains the leader's optimal objective value within interval set~$I_k$.
	Furthermore, it is easy to verify that we implement this exact recurrence in the algorithm.
	We use the bottom-up approach in the execution of our algorithm.
	That means, $\opt[j]$ for all $j<k$ is calculated prior to calculating the value of $\opt[k]$.
	All in all, the output of the algorithm returns $\opt[n]$, which is the optimal value of the \BISch problem on the instance with interval set $\{i_1,i_2, \dots, i_n\}$.
	
	Now, for the running time of the algorithm: It takes $\mathcal{O}(n\log n)$ time to arrange the intervals in non-decreasing order of their end points, another $\mathcal{O}(n\log n)$ to calculate the indices $p(j)$ for all $j\in [n]$. 
	For each pair $i<j$, it takes $\mathcal{O} (n^2\log n)$ to calculate the set~$I^{i,j}_f$ and the corresponding value of $\sol(I^{i,j}_f, w_f)$. 
	Moreover, there are ${n\choose 2}$ many $(i,j)$ pairs.
	Thus it takes altogether $\mathcal{O} (n^4\log n)$ time.
	Furthermore, it takes $\mathcal{O}(n^2)$ to execute the recursive formula in the bottom-up manner.
	Hence, the algorithm has time complexity $\mathcal{O} (n^4\log n)$.
\end{proof}

So far, we studied the bilevel versions of the \IS problem, which is also the \independentset problem on interval graphs.
Thus, we now generalize this study and consider the \BIS (\bis) problem. 

	\section{Bilevel Independent Set}\label{sec:bo:BIS}

Unlike the polynomial-time results for \BISch, here we show a variety of hardness results for different variants of \BIS.
When considering sum type objective functions for both leader and follower, the problem difficulty increases significantly from being polynomial-time solvable for the \BISch to being \sigmap-hard for \bis.

An instance of \BIS (\bis) consists of a graph $G = (\V, E)$, where $V_\ell$ and $V_f$ are the sets of vertices controlled by the leader and the follower, respectively. Moreover, let $w_\ell, w_f \colon \V \to \RR_+$ be their weight functions, respectively.
Then the \bis problem is to optimize the following objective function.
\begin{subequations}\label{prob:bo:bis}
	\begin{alignat}{4}
		&\max_{L\subseteq V_\ell}\quad &&\costleader(L \cup F)
		\label{eq:bis:1a}\\
		&&&\text{where $F\subseteq V_f$ solves the follower's problem}
		\nonumber\\[1.5ex]
		&&&{\max_{F\subseteq V_f}}^*\quad\costfollower(L\cup F)
		\label{eq:bis:1b}\\
		&&&\phantom{{\max_{F\subseteq V_f}}^*\quad} \text{such that $L\cup F$ is an independent set and $L\cup F\neq \emptyset$.}\nonumber
	\end{alignat}
\end{subequations}
where $(^*)$ encodes the behavior of the follower, either optimistic or pessimistic.
We again note that we are dealing with a maximization problem.
Additionally, note that, we impose an extra condition that $L\cup F\neq \emptyset$ in our problem. 
This is because, if we allow $L\cup F=\emptyset$, then in case of bottleneck functions for both decision makers, since the minimum over the empty set is $\infty$, 
both the leader and the follower will always choose the null set.

We study the above formulation given in \eqref{prob:bo:bis} of the \bis problem on the class of simple undirected graphs and bipartite graphs.
In the upcoming section, we begin the study of \bis on the simple undirected graph class.

\subsection{Complexity on Simple Undirected Graphs}
\label{subsec:bo:bis:general-graphs}

In this section, we study the $8$ different variants of the \bis problem and settle their computational complexity on simple undirected graphs.
For a overview of our results, refer to \Cref{results:overview}.

To begin with, we study the variants of the Problem \eqref{prob:bo:bis}, where the leader's objective function is of sum or bottleneck type, so $\costleader \in \{\costleader_s,\costleader_b\}$, the follower's objective function is of sum type, so $(\costfollower_s)$, and the follower acts as per the optimistic or pessimistic setting, so $\optimistic$ or $\pessimistic$.	
Throughout this subsection, we work towards proving the result stated in \Cref{thm:general:sigmap2}.	
For all four variants $(\costleader_s/ \costleader_b, \costfollower_s, \optimistic/\pessimistic)$, we give a unified reduction from the following $\sigmap$-complete problem~\cite{johannes2011}:

\begin{problemenv}
	\problemtitle{\textbf{Problem:} \bcnf}
	\probleminput{A boolean formula $\phi$ in CNF that contains exactly three literals per clause and consists of a variable set which is partitioned into \vset.}
	\problemquestion{Does there exist a truth assignment to the variables in $X$ such that there is no truth assignment to the variables in $Y$ that would help satisfy the formula $\phi$?}
\end{problemenv}

\textbf{Construction of an instance of \bis:}
Let us consider an instance $I$ of \bcnf which consists of a boolean formula $\phi$ with two sets of variables $X = \set{x_1, x_2, \ldots, x_{n_1}}$ and $Y = \set{y_1, y_2, \ldots, y_{n_2}}$, and the clause set $C = \set{C_1, \ldots, C_m}$, where each clause $C_i$ contains exactly three literals. 
Now, we construct an instance $I' = (\V, E, w_\ell, w_f)$ of \bis as follows (also see \Cref{fig:reduction:general}):
\begin{itemize}
	\item For each variable $x_i \in X$, the set~$V_\ell$ contains vertices $a_i, \bar{a}_i$. Denote the set of all these vertices by $A$.
	\item For each variable $y_i \in Y$, the set~$V_f$ contains vertices $b_i, \bar{b}_i$. Denote the set of all these vertices by $B$.
	\item For each clause $C_i = (l_i^1 \vee l_i^2 \vee l_i^3) \in C$, add vertices $c_i^1, c_i^2, c_i^3, c_i$ to set~$V_f$.
	\item For each $i \in [n_1]$, add an edge $\{a_i,\bar{a}_i\}$ to $E$.
	For each $i \in [n_2]$, add an edge $\{b_i,\bar{b}_i\}$ to $E$.
	\item For each $i \in [m]$, add edges $\{c_i^1,c_i^2\}, \{c_i^1,c_i^3\}, \{c_i^2,c_i^3\}, \{c_i^1,c_i\}, \{c_i^2,c_i\}, \{c_i^3,c_i\}$ to $E$.
	\item For all $i \in [m]$, if $l_i^1$ is a positive literal, say $l_i^1 = x_j$, then add edge $\{c_i^1, \bar{a}_j\}$ to $E$;
	add edge $\{c_i^1, \bar{b}_j\}$ to $E$ when $l_i^1 = y_j$.
	Similarly, if $l_i^1$ is a negative literal, say $l_i^1 = \bar{x}_j$, then add edge $\{c_i^1, {a_j}\}$ to $E$.
	Also add $\{c_i^1, b_j\}$ to $E$ when $l_i^1 = \bar{y}_j$.
	Add similar edges corresponding to all three literals $l_i^1, l_i^2, l_i^3$ of the clause $C_i$.
	\item Finally, add a vertex $z$ to set~$V_f$ and make it adjacent to $c_i$ for all $i\in [m]$.
\end{itemize}

\begin{figure}[t]
	\begin{center}
		\begin{tikzpicture}[scale=0.62, >=latex]
			\def\leadervertexCount{3}
			\def\followervertexCount{3}
			\def\clauseCount{4}
			\tikzset{
				leader_node/.style={circle, draw=black, fill=leader_color, minimum size=5mm, inner sep=0pt},
				follower_node/.style={circle, draw=black, fill=follower_color, minimum size=5mm, inner sep=0pt}
			}
			
			\node (phi) at (-6,7) {$\begin{aligned}
					\phi &= (x_3 \vee x_2 \vee \bar{x}_1)\\
					&\quad\wedge (\bar{x}_2 \vee  \bar{x}_3 \vee \bar{y}_1)\\
					&\quad\wedge (x_1 \vee x_3 \vee y_3)\\
					&\quad\wedge (y_1 \vee y_2 \vee \bar{y}_3)
				\end{aligned}$};
			
			\node (to) at (-2.5, 7) {$\longrightarrow$};
			\begin{scope}[shift={(-1, 12)},rotate=-90]
				\foreach \i in {1,...,\leadervertexCount} {
					\node[leader_node] (L\i*2) at (1,7+\i*2) {\small $a_{\pgfmathprint{int(\leadervertexCount+1-\i)}}$}; 
					\node[leader_node] (L\i*2-1) at (1,7+\i*2-1) {\small $\bar{a}_{\pgfmathprint{int(\leadervertexCount+1-\i)}}$}; 
					\draw (L\i*2) -- (L\i*2-1);
				}
								
				\foreach \i in {1,...,\followervertexCount}	{
					\node[follower_node] (F\i*2) at (1,7-\i*2) {\small $\bar{b}_\i$};
					\node[follower_node] (F\i*2+1) at (1,7-\i*2+1) {\small $b_\i$};
					\draw (F\i*2) -- (F\i*2+1);
				}
								
				\draw[black!20, dashed, rounded corners] (0.4,7.5) -- (0.4,13.5) --(1.6,13.5)-- (1.6,7.5) -- cycle;
				\draw[black!20, dashed, rounded corners] (0.4,6.5) -- (0.4,0.5) --(1.6,0.5)-- (1.6,6.5) -- cycle;
				\draw[black!20, dashed, rounded corners] (5.2,13.5) -- (6.8,13.5) -- (6.8,0) -- (5.2,0)-- cycle;
				\draw[black!20, dashed, rounded corners] (7.5,12.5) -- (8.5,12.5) -- (8.5,1) -- (7.5,1)-- cycle;
				
				\foreach \i in {1, ..., \clauseCount}	{	
					\foreach \j in {1, ..., 3} {
						\node[follower_node] (C\i\j) at (6,\i*3.5-\j) {\small $c_{\pgfmathprint{int(\clauseCount+1-\i)}}^\j$};
						\node[follower_node] (C\i) at (8,\i*3.5-2) {\small $c_{\pgfmathprint{int(\clauseCount+1-\i)}}$};
						\draw (C\i) -- (C\i\j);	
					}
					\draw (C\i1) -- (C\i2);
					\draw (C\i1) to[bend right=50] (C\i3);
					\draw (C\i2) --  (C\i3);
					\node[follower_node] (z) at (11,7) {\small $z$};
					\draw (C\i) --  (z);
				}
				
				\draw (C41.100) -- (L1*2-1);
				\draw (C42) -- (L2*2-1);
				\draw (C43) -- (L3*2);
				\draw (C31) -- (L2*2);
				\draw (C32) -- (L1*2);
				\draw (C33) -- (F1*2+1);
				\draw (C21) -- (L3*2-1);
				\draw (C22) -- (L1*2-1);
				\draw (C23) -- (F3*2);
				\draw (C11) -- (F1*2);
				\draw (C12) -- (F2*2);
				\draw (C13) -- (F3*2+1);
				
				\node at (1, 14.5) {\small \ew{M}{0}};
				\node at (1, -.5) {\small \ew{1}{M}};
				\node at (6, 14.5) {\small \ew{1}{10}};		
				\node at (8, 13.5) {\small \ew{R}{5}};		
				\node at (11, 8.5) {\small \ew{0}{1}};
			\end{scope}
		\end{tikzpicture}
	\end{center}
	\caption{$\sigmap$-hardness reduction from \bcnf to \bis for variants $(\costleader_s/\costleader_b, \costfollower_s, \optimistic/\pessimistic)$.}
	\label{fig:reduction:general}
\end{figure}

So far we have constructed the underlying graph of the instance $I'$ of \bis. 
In \Cref{fig:reduction:general}, the set of blue vertices is~$V_\ell$ and is controlled by the leader, while the set of red vertices is~$V_f$ and is controlled by the follower. 
Let $M, R\in \RR_+$ such that $M \gg R \gg \max\{n_1, n_2\}$.
Finally we conclude our construction by defining the weight functions for the leader and the follower as follows:
\begin{align*}
	(w_\ell(v), w_f(v)) &= \begin{cases}
		(M, 0) & \text{if } v \in A, \\
		(1, M) & \text{if } v \in B, \\
		(1, 10) & \text{if } v \in \set{c_i^1, c_i^2, c_i^3} \text{ for all } i \in [m], \\
		(R, 5) & \text{if } v = c_i \text{ for all } i \in [m], \\
		(0, 1) & \text{if } v = z.
	\end{cases}
\end{align*}
Note that the constructed graph is a simple undirected graph, and the construction itself can be done in polynomial time.
Based on the above construction, we can prove the following lemma.

\begin{restatable}[]{lemma}{BISCorrectnessSigmap}
	\label{lemma:sigmap2-hardness}
	In the \bis instance, when the follower has a sum type objective function, $\costfollower_s$, the leader achieves an objective value $\geq M\cdot n_1 + R$ (when considered $\costleader_s$) and value $\geq 1$ (when considered $\costleader_b$), if and only if, in the \bcnf instance, there exists a truth assignment for the variables in $X$ such that the formula $\phi$ is not satisfied for any truth assignment of the variables in $Y$.
\end{restatable}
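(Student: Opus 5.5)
The plan is to first pin down exactly what the follower's optimal reaction looks like for an arbitrary leader's action $L\subseteq A$. Then I read off the leader's value for both $\costleader_s$ and $\costleader_b$, and finally match the threshold conditions with the quantifier structure of \bcnf.

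\textbf{Step 1: the leader's action is a (partial) assignment.} Since $\{a_i,\bar a_i\}\in E$, any $L$ contains at most one of $a_i,\bar a_i$. I interpret $a_i\in L$ as $x_i=$ true and $\bar a_i\in L$ as $x_i=$ false. If $L$ contains neither, $x_i$ is \emph{unassigned}.

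\textbf{Step 2: the follower's reaction.} The follower's weights on $B$ are $M$, which dominates all his other weights, whose total is at most $10m+1$ (this needs $M\gg m$, which I will assume as part of $M\gg R$). Hence every optimal reaction contains exactly one of $b_i,\bar b_i$ for each $i$, i.e.\ it fixes a truth assignment of $Y$. By the edge rule, $c_i^j$ is adjacent only to the vertex of the \emph{negation} of the literal $l_i^j$. So $c_i^j$ is available iff $l_i^j$ is true under the combined assignment, where an unassigned $x$-literal counts as true. Each gadget $\{c_i^1,c_i^2,c_i^3,c_i\}$ is a $K_4$, so the follower takes at most one vertex from it.

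Given the $Y$-assignment, the follower's remaining payoff is $10$ per satisfied clause (he takes some available $c_i^j$) plus $5$ per unsatisfied clause (he takes $c_i$). In addition he gets $+1$ from $z$ iff no $c_i$ is taken. Taking $c_i$ beats leaving it out, since $5>1$ even when it sacrifices $z$. Because $10>5+1$, the number of satisfied clauses is maximised lexicographically. Consequently, if some $Y$-assignment satisfies all clauses, then every optimal reaction picks such a $Y$, takes no $c_i$, and takes $z$. Otherwise, every optimal reaction contains at least one $c_i$ and not $z$.

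A key point is that this dichotomy holds for \emph{every} optimal reaction. So the optimistic and pessimistic settings give the same outcome with respect to the thresholds, and no tie-breaking analysis is needed.

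\textbf{Step 3: the equivalence.} Suppose the \bcnf answer is yes, witnessed by an $X$-assignment $\alpha$, and let $L$ encode $\alpha$ fully. The leader collects $M n_1$ from $L$. Since $\phi|_\alpha$ is unsatisfiable, some $c_i$ with $w_\ell=R$ is chosen, giving $\costleader_s\geq Mn_1+R$. Moreover $z\notin F$, and all other weights are at least $1$, so $\costleader_b\geq 1$.

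For the converse with $\costleader_s$, suppose the value is at least $Mn_1+R$. Then $L$ must contain one vertex of each pair. Otherwise the value is at most $M(n_1-1)+O(n_2+m)+Rm<Mn_1$, which needs $M\gg Rm$. Also some $c_i$ must be chosen, since without any $c_i$ the remaining leader weight is at most $n_2+3m\ll R$, which needs $R\gg m$ as well. By Step 2, the encoded assignment then leaves $\phi$ unsatisfiable for every $Y$-assignment.

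For the converse with $\costleader_b$, a value of at least $1$ means $z\notin F$, so by Step 2 no $Y$-assignment satisfies $\phi$ under the partial assignment of $L$. Extending a partial assignment arbitrarily only turns some ``unassigned, counted as true'' literals false, which can only destroy satisfiability. Hence a full $X$-assignment witnessing the yes-instance exists.

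\textbf{Main obstacle.} I expect the delicate step to be Step 2: verifying that the follower's lexicographic preferences are exactly as claimed, that the dichotomy holds for all optimal reactions, and handling unassigned variables in the bottleneck case via the monotonicity argument above. I would also make explicit that the constants must satisfy $M\gg Rm$ and $R\gg n_2+m$, slightly strengthening the stated $M\gg R\gg\max\{n_1,n_2\}$.
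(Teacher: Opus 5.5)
Your proof is correct and follows essentially the same route as the paper's. The leader's action encodes an $X$-assignment, the follower's choice in $B$ encodes a $Y$-assignment, and $c_i^j$ is available exactly when its literal is true. With $L$ a full assignment in the sum case, the threshold is then met exactly when the follower's reaction contains some $c_i$, equivalently omits $z$, which happens iff no $Y$-assignment extends the leader's assignment to a satisfying one. Beyond the paper, you add several points it leaves implicit:
\begin{itemize}
\item the payoff comparison showing that \emph{every} optimal reaction obeys this dichotomy, so the optimistic and pessimistic settings coincide;
\item the monotonicity argument for partial leader actions under $\costleader_b$;
\item the explicit constant requirements $M > 10m+1$, $M \gg Rm$ and $R \gg n_2+m$.
\end{itemize}
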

\begin{proof}
	First, for the follower's optimal reaction $F$, observe that $z \notin F$ if and only if $c_i \in F$ for some $i \in [m]$.
	Now, we argue that, for the leader's objective function $\costleader_s$ (or $\costleader_b$), her objective value is $\geq M\cdot n_1 + R$ (or $\geq 1$) if and only if $F$ does not contain the vertex $z$.
	
	When the leader has a sum type objective function, $\costleader_s$, her optimal action $L$ always contains at least one vertex from each set of pairs $\set{a_i, \bar{a}_i}$ because of their associated weight~$M$, and at most one vertex because $a_i$ and $\bar{a}_i$ are adjacent to each other.
	There are $n_1$ many such pairs $\set{a_i, \bar{a}_i}$.
	Thus, the leader chooses $n_1$ many blue vertices in her action.
	Moreover, $z \notin F$ only when $c_i \in F$ for some $i \in [m]$.
	Since each vertex $c_i$ carries weight $R$ for the leader, clearly the leader's objective value, for objective function $\costleader_s$, is assured to be at least $M \cdot n_1 + R$ if and only if the follower with a sum type objective function does not add vertex $z$ to $F$, because it only happens if he adds at least one of the $c_i$'s in $F$. 
	When the leader has objective function $\costleader_b$, irrespective of her chosen action $L$, her objective value is $0$ if $z\in F$, and $1$ otherwise.
	
	In our construction, the vertices $c_i^1, c_i^2, c_i^3, c_i$ form a clique, so the follower, due to his sum type function, always chooses exactly one vertex from set $\set{c_i^1, c_i^2, c_i^3, c_i}$ for each $i\in [m]$.
	Due to his weight function $w_f$, he always prefers either of $c_i^1, c_i^2,$ or $c_i^3$ over $c_i$. 
	As a result, vertex $c_i$ belongs to $F$ if and only if none of $c_i^1, c_i^2,$ or $c_i^3$ can be included in $F$.
	It is crucial to note that the follower can add $c_i^j$ to $F$ only if its neighbor in $A \cup B$ is not in $L \cup F$. 
	Also recall that, according to our construction, $c_i^j$ 
	has a neighbor in $A \cup B$ which is the negation of its corresponding literal $l_i^j$.
	
	Finally, we claim that there exists a leader's action $L$ for instance $I'$ of \bis wherein, for all possible reactions $F$ of the follower, $F$ cannot contain any of the $c_i^1, c_i^2, c_i^3$ for some $i$, if and only if for the instance $I$ of \bcnf, there exists a truth assignment of $X$ variables which renders the formula $\phi$ unsatisfied for any truth assignment of the $Y$ variables.
	
	\forward In the instance $I$, let $f \colon X \to \set{0,1}$ be a truth assignment such that there is no truth assignment for $Y$ which satisfies formula $\phi$.
	Then consider $L = \set{a_i \mid f(x_i) = 1} \cup \set{\bar{a}_i \mid f(x_i) = 0}$. 
	For this action $L$ of the leader, we claim that there exists at least one $i\in [m]$ such that the follower cannot add any of $c_i^1, c_i^2, c_i^3$ into $F$.
	Assume otherwise, and there exists an optimal reaction $F'$ of the follower such that for all $i\in [m]$ at least one of $c_i^1, c_i^2, c_i^3$ is in $F'$.
	Note that $F'$ also contains exactly one vertex from each pair $\{b_i, \bar{b}_{i}\}$, since the follower's objective function is of sum type.
	Then consider the truth assignment $f' \colon Y \to \set{0,1}$ where we set $f'(y_i) = 1$ if and only if $\bar{b}_i\in F'$.
	Note that for each set $\{c_i^1, c_i^2, c_i^3\}$, at least one of their neighbors in $A\cup B$ is not in $L\cup F'$.
	Since each vertex corresponding to a literal $l^j_i$ is adjacent to a vertex corresponding to its negative counterpart in $A\cup B$, this implies the truth assignment $f$ of the variables in $X$, and the truth assignment $f'$ of the variables in $Y$ satisfy the formula $\phi$. 
	We get a contradiction to the assumption. 
	
	\backward Consider the leader's action $L$ which results in an optimal objective value $\geq M\cdot n_1+R$ (when considered $\costleader_s$) and $\geq 1$ (when considered $\costleader_b$). 
	Then consider the truth assignment $f_X \colon X \to \set{0,1}$ where $f_X(x_i) = 1$ if and only if $\bar{a}_i \in L$. 
	We claim that this is a feasible solution of the \bcnf instance $I$, i.e., $f_X$ cannot be extended to a satisfying assignment.
	Assume otherwise, and let $f_Y$ be the truth assignment to the variables in $Y$ such that $f_X$ and $f_Y$ satisfy $\phi$.
	Now consider the follower's partial reaction $P_F$ which contains the set $\set{b_i \mid f_Y(y_i) = 0} \cup \set{\bar{b}_i\mid f_Y(y_i) = 1}$ in instance $I'$. 
	We claim that the partial reaction $P_F$ of the follower can be extended to a complete reaction $F$ such that $L\cup F$ returns an optimal leader's value $< M\cdot n_1+R$ (when considered $\costleader_s$) and $< 1$ (when considered $\costleader_b$). 
	This is a contradiction.
	
	Note that, if each clause $C_i$ is satisfied with assignment $f_X,f_Y$, at least one of the literals corresponding to $c_i^1,c_i^2,$ or $c_i^3$ has truth value $1$.
	Let $c_i' \in \{c_i^1, c_i^2, c_i^3\}$ be a vertex whose corresponding literal evaluates to $1$.
	Thus, according to the construction of assignments $f_X$ and $f_Y$, the neighbor of $c_i'$ in $A\cup B$ does not belong to $L\cup P_F$.
	As $N(c_i') \cap (A\cup B) \notin L\cup P_F$,
	the set $\{c_i'\}\cup (L\cup P_F)$ is also an independent set.
	Extrapolating this to every clause, the follower will be able to add at least one of the vertices $c_i^1, c_i^2$, and $c_i^3$ to his  partial solution $P_F$ for all $i\in [m]$.
	Note that all these vertices are pairwise non-adjacent.
	So altogether they form an independent set.
	Thus none of $\{c_1, c_2, \dots, c_m\}$ can be included into the follower's reaction anymore, as for each $i\in[m]$ at least one of $c_i$'s neighbors from $\{c_i^1, c_i^2, c_i^3\}$ is already in the follower's reaction.
	As a result, the follower will include the vertex $z$ into his reaction. 
	Thus, the leader's objective function will evaluate to $< M\cdot n_1+R$ (when considered $\costleader_s$) and $< 1$ (when considered $\costleader_b$).
\end{proof}

\begin{restatable}[]{theorem}{SigmaPCompleteness}
	\label{thm:general:sigmap2}
	The decision version of the \bis problem is \sigmap-complete when the leader has bottleneck or sum objective function $(\costleader_b/\costleader_s)$, the follower has sum objective function $(\costfollower_s)$, and the follower behaves in an optimistic or pessimistic way $(\optimistic/\pessimistic)$.
\end{restatable}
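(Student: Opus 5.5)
The proof has two parts: membership in \sigmap and \sigmap-hardness. For hardness I would rely on the reduction from \bcnf constructed above and on \Cref{lemma:sigmap2-hardness}. The decision version asks whether there is a leader's action $L$ such that the leader's value of $L\cup F_L$ is at least a threshold $T$. Given an instance $I$ of \bcnf, I would build the instance $I'$ above in polynomial time. I would set $T = M\cdot n_1+R$ for $\costleader_s$ and $T=1$ for $\costleader_b$. \Cref{lemma:sigmap2-hardness} then says directly that $I$ is a yes-instance if and only if $I'$ is. Since \bcnf is \sigmap-complete~\cite{johannes2011}, this yields \sigmap-hardness for all four variants.

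One point needs checking so that a single lemma covers both follower behaviours. The lemma's argument uses only properties shared by \emph{every} follower's possible reaction. The follower always takes one vertex per pair $\{b_i,\bar b_i\}$ and exactly one vertex per clique $\{c_i^1,c_i^2,c_i^3,c_i\}$. He prefers any available $c_i^j$ to $c_i$, and he takes $z$ exactly when no $c_i$ is chosen. Whether some $c_i$ is forced into $F$ is the same for all possible reactions to a fixed $L$. So the leader's value relative to $T$ does not depend on how ties are broken. I would state this explicitly, so that the optimistic and pessimistic cases both follow from the lemma. I would also note that $L\cup F\neq\emptyset$ holds automatically.

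For membership I would write the problem as $\exists L\ \forall F'$ followed by a polynomial-time predicate. The certificate is the leader's action $L\subseteq V_\ell$; I assume w.l.o.g.\ that $L$ is independent. The universal part must verify that the follower's optimal reaction gives the leader at least $T$. Following the setting-dependent trick of \Cref{lem:bo:BISche:o/p-follower}, I would fold the tie-breaking into a perturbed follower weight $w_f\pm\epsilon w_\ell$, with $\epsilon$ of polynomial encoding length. This handles sum-type $\costleader_s$.

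For $\costleader_b$ combined with sum-type $\costfollower_s$, perturbing the weights does not capture the tie-breaking. There I would instead use a direct universal statement. In the pessimistic case: for every pair $(F,F')$, if $F$ is feasible and $F'$ is feasible with $\costfollower(L\cup F')\ge \costfollower(L\cup F)$ for all such $F'$, then $\costleader(L\cup F)\ge T$. This nests a further universal quantifier, so I would rephrase it with a guessed optimal follower value. Concretely: there exist $L$ and a value $v$ together with a witness $F^*$ achieving $v$, such that for all $F$, either $\costfollower(L\cup F)<v$, or $\costfollower(L\cup F)\le v$ and $\costleader(L\cup F)\ge T$ (pessimistic). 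In the optimistic case the condition becomes $\costfollower(L\cup F)\le v$ for all $F$ and $\costleader(L\cup F^*)\ge T$ for a witness $F^*$. Both are $\exists\forall$ statements with polynomial-time checks, so every variant lies in \sigmap.

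I expect the main obstacle to be this membership bookkeeping rather than hardness. The risk is making sure the pessimistic formulation stays at the second level. The guessed value $v$ and witness $F^*$ handle it: the universal check compares each $F$ only to $v$, and the existence of $F^*$ guarantees that $v$ is attained.
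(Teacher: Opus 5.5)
Your proposal is correct. Hardness follows the paper, while membership takes a different and more careful route for $\costleader_b$.

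\textbf{Hardness.} You use the same \bcnf reduction, the thresholds $M\cdot n_1+R$ and $1$, and \Cref{lemma:sigmap2-hardness}, exactly as the paper does. Your explicit tie-breaking check is also sound. The follower's value strictly decreases with the number of clauses left without an available $c_i^j$. So all his optimal reactions block the same number of clauses, and they even give the leader the same value. This just makes precise the paper's one-line remark that the lemma holds irrespective of the follower's behaviour.

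\textbf{Membership.} The paper uses one formula for all four variants: $\exists L,F\ \forall F'$ with $\costfollower_s(L\cup F)\ge\costfollower_s(L\cup F')$ and $\costleader(L\cup F)\ge k$. It encodes the setting by $w_f\pm\epsilon w_\ell$. As you observe, this perturbation makes the follower break ties by the \emph{sum} of $w_\ell$, which is the right tie-breaker only for $\costleader_s$. With $\costleader_b$, take two tied reactions whose leader weights are $\{0,100\}$ and $\{1,1\}$. The perturbation ranks them the wrong way in both the optimistic and the pessimistic setting. So the paper's argument does not establish membership for the bottleneck variants as written.

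Your guessed-witness formulation fixes this. You take $\exists L,F^*\ \forall F$, where the universal part checks that no feasible $F$ beats $F^*$ for the follower. In the pessimistic case it also checks that every tied $F$ gives the leader at least $T$. In the optimistic bottleneck case it coincides with the paper's formula without the perturbation. It works for any polynomial-time computable leader objective, so you could drop the perturbation, and the encoding-length argument for $\epsilon$, for $\costleader_s$ as well.

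\textbf{Minor points.} The guessed value $v$ is redundant once $F^*$ is guessed, since $v=\costfollower(L\cup F^*)$. The universal quantifier should range only over feasible $F$, or else carry ``$L\cup F$ is not a nonempty independent set'' as an extra disjunct.
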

\begin{proof}
	\Cref{lemma:sigmap2-hardness} establishes the correctness of our reduction when considering the leader's objective function $\costleader\in \{\costleader_s, \costleader_b\}$ and the follower's objective function $\costfollower_s$.
	Consequently, it proves that \bis is computationally at least as hard as the problem \bcnf over those cases. 
	Moreover, note that, these results hold irrespective of the behavior of the follower.
	Thus, the results hold for both the optimistic and the pessimistic setting of the follower. 
	As a consequence, we showed that \bis is \sigmap-hard when considering the leader's objective function to be $\costleader_s$ or $\costleader_b$, the follower's objective function to be $\costfollower_s$, and the follower's behavior to be $\optimistic$ or $\pessimistic$.
	
	Furthermore, it is easy to see that the problem belongs to the class \sigmap\ of the polynomial hierarchy as it can be formulated using an existential quantifier followed by a universal quantifier as follows:
	\[
		\exists L\subseteq V_\ell, F\subseteq V_f \quad \forall F'\subseteq V_f \quad
		\costfollower_s(L\cup F) \geq \costfollower_s(L\cup F')
		\text{ and } \costleader(L\cup F) \geq k .
	\]
	where $\costleader$ is the leader's objective function; $\costleader \in \{\costleader_s,\costleader_b\}$.
	This way for a given certificate $(L,F)$ consisting of a leader's action and a follower's reaction, we need to verify whether $F$ is, in fact, an optimal reaction of the follower to the given action $L$; we do this with the help of the condition $\costfollower_s(L\cup F) \geq \costfollower_s(L\cup F')$ for all $F'\subseteq V_f$.
	We additionally need to make sure that the final objective value of the leader is at least $k$; this is achieved using the condition $\costleader(L\cup F) \geq k$.
	Moreover, the optimistic or the pessimistic behavior of the follower can be captured by updating the follower's weight function such that it also captures the essence of the leader's weight function, i.e., by assuming $w_f(v)\leftarrow w_f(v) + \epsilon\cdot w_\ell(v)$ in the optimistic setting and $w_f(v) \leftarrow w_f(v) - \epsilon\cdot w_\ell(v)$ in the pessimistic setting.
	Thus the above formulation evaluates to true if and only if, for a given instance of \bis, there exists a leader's action $L$ which can be extended by a follower's optimal reaction $F$ such that $\costleader(L\cup F)\geq k$.
\end{proof}

Now, we consider the cases where both the leader and the follower have bottleneck objective functions, i.e., we consider variants $(\costleader_b, \costfollower_b, \optimistic/\pessimistic)$. 

\begin{restatable}[]{theorem}{BISGeneralP}\label{thm:general:P}
	The \bis problem is polynomial-time solvable when both the leader and the follower have bottleneck objective functions and the follower behaves optimistically, i.e., when considering $(\costleader_b, \costfollower_b, \optimistic)$. 
\end{restatable}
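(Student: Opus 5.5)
The plan is to show that an optimal leader action can be found among singletons and the empty set, using only the monotonicity of bottleneck functions: adding vertices to a set can never increase $\min_{v\in S} w(v)$. This gives a direct $\mathcal{O}(n)$ algorithm, and no earlier result of the paper is needed.

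\emph{Step 1: the follower's reaction when $L\neq\emptyset$.} Fix a leader's action $L\neq\emptyset$, where $L$ is an independent set. The empty reaction $F=\emptyset$ is feasible and gives the follower the value $\costfollower_b(L)$. Any other feasible $F$ gives
\[
\costfollower_b(L\cup F)=\min\{\costfollower_b(L),\costfollower_b(F)\}\le\costfollower_b(L),
\]
so $F=\emptyset$ is a possible reaction. Likewise, for every $F$ we have $\costleader_b(L\cup F)\le\costleader_b(L)$. Hence, in the optimistic setting, the follower's optimal reaction yields leader value exactly $\costleader_b(L)$; any tie-breaking reaction gives the same value.

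\emph{Step 2: reduce to singletons.} For $v\in L$ we have $\costleader_b(L)\le w_\ell(v)$. Therefore the best nonempty action is a singleton $\{v\}$ with $v\in\argmax_{u\in V_\ell} w_\ell(u)$, which is trivially independent. Its value is $\alpha:=\max_{u\in V_\ell} w_\ell(u)$, or $-\infty$ if $V_\ell=\emptyset$.

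\emph{Step 3: the case $L=\emptyset$.} Here the follower must choose a nonempty independent $F\subseteq V_f$. His optimal value is $\beta:=\max_{u\in V_f} w_f(u)$, attained by singletons. His possible reactions are exactly the nonempty independent sets contained in $T:=\{u\in V_f\mid w_f(u)=\beta\}$. The optimistic follower picks among these the one maximizing $\costleader_b$, and by monotonicity this is again a singleton. The leader therefore obtains $\gamma:=\max_{u\in T} w_\ell(u)$.

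\emph{Step 4: output and running time.} The algorithm outputs $\max\{\alpha,\gamma\}$ together with the corresponding action, $\{v\}$ or $\emptyset$. It runs in $\mathcal{O}(|V|)$ time, independent of the graph structure, which is why the result holds on general graphs.

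The only delicate point is the tie-handling in Step 3. I must check that restricting the optimistic follower to optimal sets inside $T$ still lets him select a singleton maximizing $w_\ell$. This follows because every singleton of $T$ is itself optimal for him.
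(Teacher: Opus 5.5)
Your proof is correct and follows essentially the same route as the paper. For $L\neq\emptyset$ the optimistic follower can answer with $\emptyset$, so the best nonempty action is a singleton from $\argmax_{v\in V_\ell} w_\ell(v)$; for $L=\emptyset$ the follower picks a single $w_f$-maximal vertex of largest $w_\ell$, and the leader takes the better of the two options. Your version is somewhat more careful, using $\costfollower_b(L\cup F)=\min\{\costfollower_b(L),\costfollower_b(F)\}$ explicitly and handling the cases where $V_\ell$ or $V_f$ is empty, but the argument is the same.
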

\begin{proof}
	Since the follower behaves optimistically, given a leader's non-empty action $L \subseteq V_\ell, L\neq \emptyset$, the follower can be assumed to react by selecting an empty set. 
	This is because the follower does not improve his objective function $\costfollower_b$ by any other choice of set~$F$ but might only reduce the leader's objective value by doing so. 
	Thus $F_{L\neq \emptyset} = \emptyset$, where $F_{L\neq \emptyset}$ denotes the optimal reaction of the follower for the leader's action $L\neq \emptyset$.
	However, when $L=\emptyset$, the follower reacts by choosing a single element $v\in V_f$ which optimizes the following objective function:
	\begin{align*}
		\max_{v\in F^\star}\quad  & w_\ell(v)\\
		&\text{where } F^\star = \argmax_{v\in V_f}\quad w_f(v)
	\end{align*}
	In other words, the follower reaction $F_{L=\emptyset}$ contains a vertex which maximizes his own objective value,
	and when there are multiple optimal reactions, he chooses one which additionally maximizes $w_\ell(v)$.
	
	For the case $(\costleader_b, \costfollower_b, \optimistic)$, if $L\neq \emptyset$, then it is safe to assume $L\subseteq \big \{v'\mid v'\in \argmax_{v\in V_\ell} w_\ell(v) \big \}$ because the follower's optimal reaction is $F_{L\neq \emptyset}=\emptyset$, and thus $\max_{v\in V_\ell} w_\ell(v)$ is the best objective value that the leader can achieve.
	
	\sloppypar{In consideration of the above cases, the leader strategically chooses either a vertex from $\argmax_{v\in V_\ell} w_\ell(v)$ if $w_\ell(F_{L=\emptyset}) \leq \max_{v\in V_\ell} w_\ell(v)$, or $L = \emptyset$ if $w_\ell(F_{L=\emptyset}) > \max_{v\in V_\ell} w_\ell(v)$.
	}

	Calculating both $\argmax_{v\in V_\ell} w_\ell(v)$ and $F_{L=\emptyset}$ can be done in polynomial time.
	Thus, the $(\costleader_b, \costfollower_b, \optimistic)$ variant of \bis is tractable. 
\end{proof}

\begin{restatable}[]{theorem}{BISCbDbPNPhard}\label{thm:c_bd_bp}
	The decision version of the \bis problem is \np-complete when both leader and follower have bottleneck objective functions and the follower behaves pessimistically, i.e., when considering $(\costleader_b, \costfollower_b, \pessimistic)$.
\end{restatable}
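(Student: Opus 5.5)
The plan is to prove membership in \np and \np-hardness separately. For hardness I would reduce from \textsc{3-SAT}. The key step is to understand the pessimistic follower's reaction under bottleneck objectives, since that also gives the polynomial-time verification needed for membership.

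\textbf{Follower's reaction.} Fix a leader's action $L\neq\emptyset$ and let $\mu\coloneqq\min_{v\in L} w_f(v)$. Adding vertices can only lower a minimum, so the follower's optimal value is exactly $\mu$, and $F=\emptyset$ attains it. His possible reactions are therefore exactly the sets $F\subseteq V_f$ that satisfy two conditions: $L\cup F$ is independent, and every $v\in F$ has $w_f(v)\geq\mu$ and is non-adjacent to $L$. The pessimistic follower picks among these a set minimizing $\min_{v\in L\cup F}w_\ell(v)$. A single vertex suffices for this: he adds one vertex $v$ with $w_f(v)\ge\mu$, $v\notin N(L)$, and smallest $w_\ell(v)$, provided such a $v$ exists and $w_\ell(v)$ is below $\min_L w_\ell$.

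If $L=\emptyset$, the non-emptiness constraint forces $F\neq\emptyset$. The follower then picks one vertex in $\argmax_{v\in V_f}w_f(v)$, breaking ties by smallest $w_\ell$.

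Both cases are computable in polynomial time. Hence the certificate $L$, together with a check of the resulting leader value against the threshold $k$, places the decision version in \np.

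\textbf{Reduction.} Given a \textsc{3-SAT} formula with variables $x_1,\dots,x_n$ and clauses $C_1,\dots,C_m$, I would construct the following instance.
\begin{itemize}
\item Leader vertices: $a_i,\bar a_i$ for each variable, joined by the edge $\{a_i,\bar a_i\}$.
\item Follower vertices: one vertex $c_j$ per clause, adjacent to $a_i$ if $x_i\in C_j$ and to $\bar a_i$ if $\bar x_i\in C_j$.
\item Weights: $(w_\ell,w_f)=(1,1)$ on leader vertices and $(0,1)$ on follower vertices.
\end{itemize}
The claim is that the leader can achieve value $\geq 1$ if and only if the formula is satisfiable.

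By the characterization above, since all $w_f$ are equal, the pessimistic follower adds any clause vertex $c_j$ not adjacent to $L$, and this drives the leader's value to $0$. For $L=\emptyset$, the follower picks a clause vertex, again giving value $0$. So the leader gets value $1$ exactly when $L$ is a non-empty independent set among the $a_i,\bar a_i$ with $N(L)\supseteq\{c_1,\dots,c_m\}$.

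\forward A satisfying assignment gives $L=\{a_i: x_i=1\}\cup\{\bar a_i:x_i=0\}$. This set is independent and non-empty, and every $c_j$ has a neighbor in it.

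\backward Such an $L$ contains at most one of $a_i,\bar a_i$ for each $i$. Setting $x_i=1$ iff $a_i\in L$ (arbitrary otherwise) satisfies every clause, because each $c_j$ is adjacent to some vertex of $L$, which corresponds to a true literal of $C_j$.

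\textbf{Main obstacle.} The delicate step is the precise description of the pessimistic follower's tie-breaking. I need to show that adding vertices never improves his bottleneck value but is always weakly optimal for him whenever $w_f(v)\ge\mu$. This is what lets him ``punish'' the leader with any unblocked clause vertex. Handling the degenerate case $L=\emptyset$ via the non-emptiness constraint is the remaining detail.
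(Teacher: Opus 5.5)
Your proof is correct. The membership part matches the paper. There, the pessimistic follower adds at most one vertex of $V_f\setminus N(L)$ with $w_f\ge\mu$ and minimal $w_\ell$. When $L=\emptyset$, he picks a single vertex of $\argmax_{v\in V_f} w_f(v)$ with minimal $w_\ell$. Your hardness reduction, however, takes a different route.

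Both reductions rest on the same observation. When all follower weights are equal, the pessimistic follower is indifferent, so he will add any vertex of $V_f\setminus N(L)$ with $w_\ell=0$. Hence the leader attains value $1$ exactly when she chooses a nonempty independent set in $V_\ell$ that dominates $V_f$.

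The paper reduces from \textsc{Vertex Cover}, where the budget $|S|\le k$ must be encoded through independence. This is why it uses $k$ copies $V^1,\dots,V^k$ of $V$, each turned into a clique, so that $L$ contains at most one vertex per copy. You reduce from \textsc{3-SAT}, where no budget is needed and consistency of the assignment is enforced by the single edge $\{a_i,\bar a_i\}$. Your instance is therefore linear in the size of the formula. The paper's instance has $kn$ leader vertices and $\Theta(kn^2)$ clique edges.

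The paper later subdivides its clique edges to obtain bipartite instances in \Cref{thm:bipartite:c_bd_bp:c_bd_so/p}. Your edges $\{a_i,\bar a_i\}$ could be subdivided the same way, into a path $a_i, a_i', d_i, \bar a_i', \bar a_i$ with $a_i',\bar a_i'\in V_\ell$ and $d_i\in V_f$. So your lighter gadget loses nothing there either.
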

\begin{proof}
	In this case, given a leader's action $L\subseteq V_\ell$, the follower's optimal reaction is as follows:
	\begin{itemize}
		\item When $L=\emptyset$, the follower's reaction is 
		\[
		F_L = \min_{v\in F^\star}~ w_\ell(v), \qquad  \text{where } F^\star = \argmax_{v \in V_f}~ w_f(v)
		\]
		The follower picks---among all maximum weighted vertices according to the weight function $w_f$---one which has the minimum weight according to the weight function $w_\ell$.
		\item When $L\neq \emptyset$, the follower reacts by choosing
		\[
		F_L = \argmin_{\substack{v \in V_f\setminus N(L)\\ w_f(v)\geq \costfollower_b(L)}} \quad w_\ell(v),
		\]		
	\end{itemize}
	if no such $v\in V_f\setminus N(L)$ exists, then $F_L=\emptyset$
	A pessimistic follower always tries to sabotage the objective function of the leader as long as he does not reduce his own objective value in the process.
	Given a leader's action $L\subseteq V_\ell$ as a certificate, the optimal reaction of the follower $F_L$ can be calculated in polynomial time, and one can verify the leader's final objective value $\costleader_b(L \cup F_L)$. 
	Thus the problem belongs to the class \np.
	
	In order to show hardness, we give a reduction from the classical \textsc{Vertex Cover} problem to the decision version of \bis.
	In \textsc{Vertex Cover}, given an undirected graph $G = (V, E)$ and a positive integer $k$, we are asked to find a subset $S \subseteq V$ of at most $k$ vertices such that the removal of this set renders the graph edgeless, i.e., $G[V \setminus S]$ is an empty graph.
	
	\begin{figure}[t]
		\begin{center}
			\begin{tikzpicture}[scale=0.8]
				\def\height{3}
				\tikzset{
					leader_node/.style={circle, draw=black, fill=leader_color, minimum size=6mm, inner sep=0pt},
					follower_node/.style={circle, draw=black, fill=follower_color, minimum size=6mm, inner sep=0pt}
				}
					
				\draw[black!40, dashed, rounded corners] (-0.5,\height+0.5) -- (3.5,\height+0.5) -- (3.5,\height-0.5) -- (-0.5,\height-0.5) -- cycle;
				\draw[black!40, dashed, rounded corners] (4.5,\height+0.5) -- (8.5,\height+0.5) -- (8.5,\height-0.5) -- (4.5,\height-0.5) -- cycle;
				\draw[black!40, dashed, rounded corners] (9.5,\height+0.5) -- (13.5,\height+0.5) -- (13.5,\height-0.5) -- (9.5,\height-0.5) -- cycle;
				\draw[black!40, dashed, rounded corners] (4.5,0.5) -- (9.5,0.5) -- (9.5,-0.5) -- (4.5,-0.5) -- cycle;
				
				\node[leader_node] (v11) at (0,\height) {$v_1^1$};
				\node[leader_node] (v12) at (1,\height) {$v_1^2$};
				\node () at (2,\height) {$\cdots$};
				\node[leader_node] (v1k) at (3,\height) {$v_1^k$};
				
				\node[leader_node] (v21) at (5,\height) {$v_2^1$};
				\node[leader_node] (v22) at (6,\height) {$v_2^2$};
				\node () at (7,\height) {$\cdots$};
				\node[leader_node] (v2k) at (8,\height) {$v_2^k$};
				
				\node () at (9,\height) {$\cdots$};
				
				\node[leader_node] (v31) at (10,\height) {$v_n^1$};
				\node[leader_node] (v32) at (11,\height) {$v_n^2$};
				\node () at (12,\height) {$\cdots$};
				\node[leader_node] (v3k) at (13,\height) {$v_n^k$};
				
				\node[follower_node] (ve1) at (5,0) {$v_{e_1}$};
				\node[follower_node] (ve2) at (6,0) {$v_{e_2}$};
				\node[follower_node] (ve3) at (7,0) {$v_{e_3}$};
				\node () at (8,0) {$\cdots$};
				\node[follower_node] (vem) at (9,0) {$v_{e_m}$};
				
				\node at (-1.2, \height) {\ew{1}{M}};
				\node at (3.8, 0) {\ew{0}{M}};
				
				\draw (v11) to[bend left=40] (v21);
				\draw (v11) to[bend left=40] (v31);
				\draw (v21) to[bend left=40] (v31);
				\draw (v12) to[bend left=40] (v22);
				\draw (v12) to[bend left=40] (v32);
				\draw (v22) to[bend left=40] (v32);
				\draw (v1k) to[bend left=40] (v2k);
				\draw (v1k) to[bend left=40] (v3k);
				\draw (v2k) to[bend left=40] (v3k);
				
				\draw (v11) --(ve1);
				\draw (v12) --(ve1);
				\draw (v1k) --(ve1);
				\draw (v11) --(ve2);
				\draw (v12) --(ve2);
				\draw (v1k) --(ve2);
				\draw (v21) --(ve2);
				\draw (v22) --(ve2);
				\draw (v2k) --(ve2);			
				\draw (v21) --(ve3);
				\draw (v22) --(ve3);
				\draw (v2k) --(ve3);			
				\draw (v31) --(vem);
				\draw (v32) --(vem);
				\draw (v3k) --(vem);
			\end{tikzpicture}
		\end{center}
		\caption{\np-hardness reduction for the \bis problem when considering $(\costleader_b, \costfollower_b, \pessimistic)$. The blue vertices form the set $V_\ell$ whereas the red vertices form the set~$V_f$.}
		\label{fig:general:c_bd_bp}
	\end{figure}
	
	Let $I=(G=(V,E), k)$ be an instance of the \textsc{Vertex Cover} problem.
	Then we construct a new instance $I' = (G' = (\V, E'), k', w_\ell, w_f )$ of the decision version of \bis where both leader and follower have bottleneck type objective functions and the follower behaves pessimistically.
	Here, $k'$ is the threshold in the decision version of \bis. 
	We give the construction as follows (also see \Cref{fig:general:c_bd_bp}):
	\begin{itemize}
		\item For each $v \in V$, add $k$ copies of $v$ to $V_\ell$; let $v^1, v^2, \ldots, v^k \in V_\ell$ for $i\in [k]$. 
		Define $V^i = \set{v^i \mid v \in V}$ for all $i\in[k]$.
		\item For each $e \in E$, add a vertex $v_e \in V_f$.
		\item If an edge $e$ is incident to $v$, add edges $\{v^i, v_e\}\in E'$ for all $i\in[k]$.
		\item Make each $G[V^i]$ a clique, so add edges between each pair of vertices in $V^i$.
		\item Let $k' = 1$.
	\end{itemize}
	\noindent Finally we conclude our construction by defining the following weight functions: 
	\begin{align*}
		w_\ell(v) &= \begin{cases}
			1 & \text{if } v \in V_\ell, \\
			0 & \text{if } v \in V_f,
		\end{cases} &
		w_f(v) &= \begin{cases}
			M & \text{if } v \in V_\ell, \\
			M & \text{if } v \in V_f,
		\end{cases}
	\end{align*}
	where $M\gg 0$.
	
	\noindent\textbf{Correctness:}
	Now, we show that $I$ is an yes-instance of the \textsc{Vertex Cover} problem if and only if $I'$ is a yes-instance of \bis when considered $(\costleader_b, \costfollower_b, \pessimistic)$.
	
	\forward In the instance $I$ of the {\sc Vertex Cover} problem, assume that there is a subset $S \subseteq V$ with $|S| \leq k$ such that $G[V \setminus S]$ does not contain any edges.
	Then we claim that there is a leader's action $L \subseteq V_\ell$ in $I'$ which yields an objective value $1$ for the leader.
	From each set $V^i$, $i \in [k]$, we pick one vertex corresponding to a vertex in $S$.
	Let the resultant set be the leader's action $L$.
	We have at most $k$ vertices in $S$, so it can be guaranteed that $L$ is an independent set.
	Moreover, there are at most $k$ vertices in $S$ and we have $k$ many different sets $V^i$ each for $i\in [k]$; we can cover one copy of each vertex in $S$ from the sets $V^i$.
	Given this choice $L$ of the leader, the follower cannot pick any vertex in the solution because $V_f\setminus N(L) = \emptyset$, as every edge $e\in E$ was covered by a vertex in~$S$.		
	Since $F_L=\emptyset$, the leader's objective function is simply $\costleader_b(L\cup F_L)=\costleader_b(L) = 1$.
	
	\backward Now, if there exists a leader's action $L \subseteq V_\ell$ for which she achieves the objective value at least 1, then it is clear that $V_f \subseteq N(L)$, since the follower was not able to sabotage the leader's objective function by picking a vertex from set~$V_f$. 
	Moreover, since $L$ is an independent set, $|L| \leq k$ because the leader can choose at most one vertex from each of the sets $V^i$, $i\in [k]$.
	Consider the corresponding vertices in $V(G)$ whose copies were chosen in $L$. 
	Let us denote this set by $S$; $|S|\leq k$.
	Note that $S$ covers all edges in $E(G)$, since $V_f \subseteq N(L)$.
	Thus, $S$ is a vertex cover of $G$.
\end{proof}

Next, we resolve the remaining cases $(\costleader_s, \costfollower_b, \optimistic/\pessimistic)$ of \bis on general graphs.

\begin{restatable}[]{theorem}{BISCsDbOPNPcomplete}\label{thm:bis:gen:np-hard:c_sd_bp/o}
	The decision version of the \bis problem is \np-complete when the leader's objective function is of sum type, the follower's objective function is of bottleneck type, and the follower behaves optimistically or pessimistically, i.e., when considering $(\costleader_s, \costfollower_b, \optimistic)$ and $(\costleader_s, \costfollower_b, \pessimistic)$.
\end{restatable}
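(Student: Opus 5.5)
The plan is to prove membership in \np and \np-hardness separately for each of the two settings. Both rest on one structural observation about a bottleneck follower. If $L\neq\emptyset$, the follower's value is $\min\{\costfollower_b(L),\costfollower_b(F)\}\le \costfollower_b(L)$, and $F=\emptyset$ attains this bound. Hence his optimal reactions to $L$ are exactly the independent sets $F\subseteq V_f\setminus N(L)$ with $w_f(v)\ge \costfollower_b(L)$ for all $v\in F$. If $L=\emptyset$, his optimal value is $\max_{v\in V_f} w_f(v)$, and his optimal reactions are the nonempty independent sets $F\subseteq V_f$ whose vertices all have that maximum $w_f$-weight.

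\textbf{Membership in \np.} In the optimistic setting, the certificate is a pair $(L,F)$. We check that $L\cup F$ is a nonempty independent set, that $F$ is an optimal reaction by the characterization above (this takes polynomial time, since the follower's optimal value is known explicitly), and that $\costleader_s(L\cup F)\ge k$. Because the follower is optimistic, any such optimal $F$ witnesses that the leader can reach value $k$. In the pessimistic setting, the weights are nonnegative, so adding vertices never decreases $\costleader_s$. Therefore, for $L\neq\emptyset$, the pessimistic follower reacts with $F_L=\emptyset$. For $L=\emptyset$, he picks a single vertex of maximum $w_f$ that has minimum $w_\ell$ among those. Thus $F_L$ is computable in polynomial time from $L$, and $L$ alone is the certificate.

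\textbf{Hardness.} Both settings reduce from the classical (unweighted) \independentset decision problem on a graph $G$ with threshold $k\ge 1$.

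For the pessimistic setting, put all of $V(G)$ into $V_\ell$ with $w_\ell\equiv 1$ and arbitrary $w_f$. Add one isolated follower vertex $u$ with $w_\ell(u)=0$. If $L\neq\emptyset$, the follower answers with $\emptyset$, so the leader's value is $|L|$. If $L=\emptyset$, the follower picks $u$ and the leader's value is $0$. Hence the optimal leader value is at least $k$ if and only if $G$ has an independent set of size at least $k$.

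For the optimistic setting, put all of $V(G)$ into $V_f$ with $w_f\equiv 1$ and $w_\ell\equiv 1$, and let $V_\ell=\emptyset$, so that the only leader action is $L=\emptyset$. Every nonempty independent set of $G$ is then an optimal reaction, and the optimistic follower chooses one that maximizes $\costleader_s$, namely a maximum independent set. So the leader's value is at least $k$ if and only if $G$ has an independent set of size at least $k$.

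\textbf{Main obstacle.} The step needing the most care is the characterization of the follower's optimal reactions, in particular the boundary case $L=\emptyset$ forced by the constraint $L\cup F\neq\emptyset$. The pessimistic reduction also relies on the fact that adding zero- or positive-weight vertices never lowers $\costleader_s$. Once this characterization is stated precisely, both the \np verification and the two reductions follow directly. If a nonempty $V_\ell$ is preferred in the optimistic reduction, one can add an isolated leader vertex of weight $\ew{0}{0}$: choosing it drops the follower's threshold to $0$ while still allowing him to add any independent set of $G$.
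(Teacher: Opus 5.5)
Your proposal is correct. The \np-membership part matches the paper's argument. In the pessimistic case the certificate is $L$ alone, since $F_L$ is $\emptyset$ for $L\neq\emptyset$ and otherwise a single max-$w_f$, min-$w_\ell$ vertex. In the optimistic case the certificate is $(L,F)$, checked against the explicit characterization of the follower's optimal reactions. Your pessimistic reduction is also the paper's reduction from \independentset (the leader owns all of $G$, unit weights), except that you add the dummy follower vertex $u$.

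The paper, however, uses that one instance ($V_\ell=V(G)$, $V_f=\emptyset$) for both settings. With no follower vertices, $L=\emptyset$ is simply infeasible and every nonempty independent $L$ is answered by $F=\emptyset$, so the tie-breaking rule never matters. Your $u$ is therefore harmless but unnecessary, and your pessimistic instance would work verbatim for $(\costleader_s,\costfollower_b,\optimistic)$ too.

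Your separate optimistic reduction ($V_\ell=\emptyset$, the follower owns $G$ with unit weights) is valid and genuinely different. There the hardness comes entirely from the optimistic tie-breaking, not from the leader's choice. What it buys is the observation that in the optimistic setting even evaluating the leader's value of a fixed action (here $L=\emptyset$, whose value is $\alpha(G)$) is \np-hard. This explains why the certificate must contain $F$ and not just $L$, in contrast with the pessimistic case. What the paper's route buys is economy: one trivial instance settles both variants.
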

\begin{proof}
	First, we prove the \np-hardness for both these cases.
	We give a naive reduction from the \independentset problem which works for both the variants.
	It is well-known that \independentset is \np-complete~\cite{karp2009reducibility}.
	Consider the underlying graph $G = (V,E)$ of an instance $I$ of  \independentset.
	Take a replica of this graph to construct an instance $I'$ of \bis.
	Let the vertex sets controlled by the leader and the follower be $V_\ell = V$ and $V_f = \emptyset$, respectively.
	Moreover, consider the weight functions of the leader and the follower to be $w_\ell, w_f \colon V \to \{1\}$.
	
	Note that the follower always has an empty reaction irrespective of his behavior because $V_f = \emptyset$ implies he does not have a choice.
	So the leader does not have to care about the follower's reaction.
	Hence, for the variant $(\costleader_s, \costfollower_b, \pessimistic)$ and $(\costleader_s, \costfollower_b, \optimistic)$, the \bis problem boils down to solving for the maximum weighted independent set in the graph according to function $w_\ell$.
	This is equivalent to solving the original instance $I$ of the \independentset problem.
	Thus our reduction is sound.
	The variants $(\costleader_s, \costfollower_b, \optimistic)$ and $(\costleader_s, \costfollower_b, \pessimistic)$ of \bis are \np-hard. 
	
	Now, for each of the cases, we individually prove it's containment in \np.
	
	First, we prove that \bis variant $(\costleader_s, \costfollower_b, \pessimistic)$ is in \np.
	First, we show the containment in the class \np.
	The certificate is just the leader's solution $L$.
	If $L = \emptyset$, then a pessimistic follower selects a vertex that is best for him and worst for the leader which can be computed using:
	\begin{align*}
		\min_{v\in F^\star}\quad  & w_\ell(v)\\
		&\text{where } F^\star = \argmax_{v\in V_f}~ w_f(v).
	\end{align*}
	If $L \not = \emptyset$, then any additional vertex by the follower can neither improve the follower's objective value nor make the leader's value worse, hence the follower selects nothing.
	Hence, for any given certificate $L\subseteq V_\ell$, one can calculate the follower's optimal reaction, and consequently verify the leader's final objective value.
	Thus, the problem is in \np.		
	
	Next, we prove that \bis variant $(\costleader_s, \costfollower_b, \optimistic)$ is in \np.
	To show containment in \np, we show that there is a valid polynomial size certificate which can be verified in polynomial time.
	Let the certificate be made up of both the leader's and the follower's chosen set of variables $L \cup F$.
	To check whether $L\cup F$ is a valid solution of \bis, we need to check if $L\cup F$ forms an independent set.
	However, along with that, we also need to prove that 
	the follower's optimal reaction $F_L$ to the leader's action $L$ evaluates the objective function of the leader to a value that is at most as much as the solution $L\cup F$ does.
	Once these conditions are verified, it is easy to calculate the objective value of the leader, so calculate $\sum_{v\in L\cup F} w_\ell(v)$ to verify whether the given instance is a yes-instance of variant $(\costleader_s, \costfollower_b, \optimistic)$ of \bis.
	Verifying if $L\cup F$ is an independent set is trivial in polynomial time.
	Consider the induced subgraph over the vertex set $L\cup F$ and check whether there is an edge in it.
	If $L\neq \emptyset$, then the follower's objective value cannot be greater than $\costfollower_b(L)$.
	In this case, the follower's objective value can never improve by choosing more vertices, it is determined entirely by the leader's solution.
	Thus, in addition to $L\cup F$ being an independent set, if $\costfollower_b(L\cup F) = \costfollower_b(L)$ for $L\neq \emptyset$, then $F$ belongs to the set of the follower's possible reactions to $L$.
	Now, for the part when $L=\emptyset$, the follower's reaction $F$ needs to satisfy the following conditions:
	\begin{enumerate}
		\item[(1)] $L\cup F$ is an independent set in the graph, which is equivalent to $F$ being an independent set, 
		\item[(2)] $F \subseteq \argmax \{w_f(v) \mid v\in V_f\}$ because the follower wants to maximize the minimum weighted vertex in the solution, thus he would only chose from the set of vertices which gives him the highest weight possible.
	\end{enumerate}
	It is easy to check whether conditions (1) and (2) hold.
	Moreover, since the follower behaves optimistically towards the leader, it suffices to have one witness (for example, the set $F$ here) to show that, the leader's objective value evaluates to at least $\sum_{v\in L\cup F} w_\ell(v)$.	This is because the optimistic follower would help maximize the objective function of the leader.
	Thus, he will always choose a set $F \subseteq \argmax \{w_f(v) \mid v\in V_f\}$ which works the best for the leader.
	Thus, the variant $(\costleader_s, \costfollower_b, \optimistic)$ of \bis belongs to the class \np.		
\end{proof}

\subsection{Complexity on Bipartite Graphs}
\label{sec:bis:bipartite}

In this section, we study \bis on bipartite graphs. 
The maximum weight \independentset problem is polynomial-time solvable on bipartite graphs~\cite{Edmonds}. 
However, when considering bilevel variants of this problem, we get a variety of results on different variants.
Please refer to \Cref{subsec:bo:contribution} for an overview of our results given in \Cref{results:overview}.

We first consider the variants $(\costleader_s, \costfollower_s, \optimistic/ \pessimistic)$.

\subsubsection{Both decision makers have sum type objective functions}
\label{subsubsec:bis:bipartite:sum-type}

According to the upcoming result, we not only show \np-completeness on bipartite graphs, but we show it on an even restricted planar bipartite graphs.

\begin{restatable}[]{theorem}{BipartiteCsDsOPNPComplete}
	\label{thm:bipartite:c_sd_so/p}
	Let $G = (V, E)$ be a planar bipartite graph where $V=\V$, along with two weight functions $w_\ell, w_f\colon V\to \RR_+$ such that $V_\ell, w_\ell$ correspond to the leader and $V_f, w_f$ correspond to the follower.
	The decision version of the \bis problem is \np-complete when the leader has a sum objective function $(\costleader_s)$, the follower has a sum objective function $(\costfollower_s)$, and the follower behaves in an optimistic or pessimistic way $(\optimistic/ \pessimistic)$, i.e., when considering $(\costleader_s, \costfollower_s, \optimistic/\pessimistic)$.
\end{restatable}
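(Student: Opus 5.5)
The plan is to show membership in \np\ using the polynomial solvability of maximum weight independent set on bipartite graphs, and hardness by a reduction from \textsc{Vertex Cover} on planar graphs, which is \np-complete. The construction subdivides every edge, so planarity and bipartiteness are preserved automatically. All follower weights are chosen so that the follower's optimal reaction is unique; then the optimistic and pessimistic settings coincide and one reduction handles both.

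\emph{Membership in \np.} The certificate is the leader's action $L\subseteq V_\ell$, and we first check that $L$ is independent. The follower's feasible reactions are exactly the independent sets of the bipartite graph $G[V_f\setminus N(L)]$, and $\costfollower_s(L\cup F)=\costfollower_s(L)+\costfollower_s(F)$. So his optimal reaction is a maximum $w_f$-weight independent set there, computable in polynomial time~\cite{Edmonds}. To encode the optimistic or pessimistic tie-breaking, we first perturb $w_f\leftarrow w_f\pm\epsilon w_\ell$ exactly as in \Cref{lem:bo:BISche:o/p-follower}. We then evaluate $\costleader_s(L\cup F_L)$ and compare it with the threshold.

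\emph{Hardness.} Let $(G=(V,E),k)$ be a planar \textsc{Vertex Cover} instance with $n=|V|$ and $m=|E|$. We build the instance $I'$ as follows.
\begin{enumerate}[label=(\roman*)]
\item For each $v\in V$, add a leader vertex $v\in V_\ell$ with weights $(0,0)$, and a follower pendant $z_v\in V_f$ adjacent only to $v$, with weights $(1,1)$.
\item For each edge $e=\{u,v\}\in E$, add a follower vertex $x_e$ adjacent to $u$ and $v$, with weights $(0,2)$.
\item For each edge $e$, add a further follower vertex $y_e$ adjacent only to $x_e$, with weights $(M,1)$, where $M\gg n$.
\end{enumerate}
This graph is obtained from $G$ by subdividing every edge and attaching pendants, so it is planar. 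It is bipartite with sides $V\cup\{y_e\}$ and $\{x_e\}\cup\{z_v\}$. The threshold is $k'=mM+n-k$.

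\emph{Follower's reaction and correctness.} The follower's graph splits into components $\{z_v\}$ and $\{x_e,y_e\}$. Hence his unique optimal reaction to any independent $L$ is determined componentwise:
\begin{itemize}
\item he takes $z_v$ if and only if $v\notin L$;
\item he takes $x_e$ (follower weight $2>1$) if $e$ has no endpoint in $L$, and otherwise he takes $y_e$.
\end{itemize}
There are no ties, so this holds in both the optimistic and the pessimistic setting. The leader's value is therefore $M\cdot|\{e : e\cap L\neq\emptyset\}|+(n-|L|)$. Since $M\gg n$, this is at least $k'$ if and only if $L$ covers every edge and $|L|\le k$.

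It remains to check that the leader can actually choose such an $L$. In $I'$ the leader vertices are pairwise non-adjacent, since the original edges were subdivided. So every vertex cover $S$ of $G$ is a feasible leader action, giving the forward direction. Conversely, any $L$ reaching value $k'$ is a vertex cover of size at most $k$.

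\emph{Main obstacle.} The delicate step was designing the gadget so that the penalty is triggered exactly by uncovered edges while keeping the graph bipartite and planar. Having $x_e$ block the high-leader-weight $y_e$ achieves this. What remains is a routine verification that no ties occur and that $M$ dominates the $n-|L|$ term.
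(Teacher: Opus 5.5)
Your proposal is correct and follows essentially the same route as the paper: a reduction from planar \textsc{Vertex Cover} in which a follower vertex subdividing each edge (your $x_e$, the paper's $b_j$) blocks a pendant of leader weight $M$ (your $y_e$, the paper's $b_j'$), with unique follower reactions so that the optimistic and pessimistic settings coincide, threshold $mM+n-k$, and \np-membership via bipartite maximum-weight independent set after the $\epsilon$-perturbation. The only deviation is cosmetic. Your $+1$-per-unchosen-vertex gadget is a follower pendant $z_v$ with weights $(1,1)$ that the follower takes, whereas the paper uses a leader pendant $a_i'$ with weights $(1,0)$ that the leader takes herself.
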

\begin{proof}
	For both variants of the problem $(\costleader_s, \costfollower_s, \optimistic/\pessimistic)$, the problem lies in \np because a valid certificate is the leader's solution. 
	The follower's solution is just a weighted independent set problem on the remaining graph which is computable in polynomial time as the graph is bipartite. 
	This consequently shows \bis on planar bipartite graphs is also in \np.
	The optimistic or pessimistic behavior of follower can be captured by using the updated weight function of the follower, i.e., by assuming $w_f(v)\leftarrow w_f(v) + \epsilon\cdot w_\ell(v)$ in the optimistic setting and $w_f(v) \leftarrow w_f(v) - \epsilon\cdot w_\ell(v)$ in the pessimistic setting.
	
	Moreover, to prove the \np-hardness of the problems, we give a reduction from the planar vertex cover problem which has been shown to be \np-hard~\cite{Lichtenstein1982PlanarFA}.
	Given a planar graph and a positive integer $k$, the planar vertex cover problem asks if there is a vertex cover of size at most $k$.
	
	\textbf{Construction:} Let $I = (G = (V, E), k)$ be an instance of the planar vertex cover problem.
	Let $V = \set{v_1, v_2, \ldots, v_n}$ and $E = \set{e_1, e_2, \ldots, e_m}$. 
	Now, we construct an instance $I' = (G'= (V_\ell \cup V_f, E'), k', w_\ell, w_f)$ of the decision version of \bis as follows (also see \Cref{fig:reduction:bipartite}):
	\begin{itemize}
		\item For each $v_i \in V$, add vertices $a_i, a_i'$ in $V_\ell$. 
		Define $A \coloneqq \set{a_1, \ldots, a_n}$ and $A' \coloneqq \set{a_1', \ldots, a_n'}$. 
		\item For each $e_i \in E$, add vertices $b_i, b_i'$ in $V_f$. 
		Define $B \coloneqq \set{b_1, \ldots, b_m}$ and $B' \coloneqq \set{b_1', \ldots, b_m'}$.
		\item For each $i \in [n]$, add an edge $\{a_i, a_i'\}$ to $E'$. 
		For each $j \in [m]$, add an edge $\{b_j,b_j'\}$ to $E'$.
		\item Add an edge $\{a_i, b_j\}$ to $E'$ if and only if edge $e_j$ is incident to vertex $v_i$ in $E$.
	\end{itemize}
	
	\begin{figure}[t]
		\begin{center}
			\begin{tikzpicture}
				\def\vertexCount{7}
				\tikzset{
					leader_node/.style={circle, draw=black, fill=leader_color, minimum size=6mm, inner sep=0pt},
					follower_node/.style={circle, draw=black, fill=follower_color, minimum size=6mm, inner sep=0pt}
				}
				
				\foreach \i in {1,...,\vertexCount} {
					\node[leader_node] (L\i) at (0,\i) {$a_{\pgfmathprint{int(8-\i)}}$};
					\node[leader_node] (L'\i) at (-1.5,\i) {$a'_{\pgfmathprint{int(8-\i)}}$};
					\node[follower_node] (F\i) at (4,\i) {$b_{\pgfmathprint{int(8-\i}}$};
					\node[follower_node] (F'\i) at (5.5,\i) {$b'_{\pgfmathprint{int(8-\i}}$};
					\draw (L\i) -- (L'\i);
					\draw (F\i) -- (F'\i);
				}
				
				\node at (-1.5, 7.8) {\ew{1}{0}};
				\node at (0, 7.8) {\ew{0}{0}};
				\node at (4, 7.8) {\ew{0}{100}};
				\node at (5.5, 7.8) {\ew{M}{1}};
								
				\draw[black!40, dashed, rounded corners] (-2,7.5) -- (-1,7.5) --(-1,0.5)-- (-2,0.5) -- cycle;
				\draw[black!40, dashed, rounded corners] (-0.5,7.5) -- (0.5,7.5) --(0.5,0.5)-- (-0.5,0.5) -- cycle;
				\draw[black!40, dashed, rounded corners] (3.5,7.5) -- (4.5,7.5) --(4.5,0.5)-- (3.5,0.5) -- cycle;
				\draw[black!40, dashed, rounded corners] (5,7.5) -- (6,7.5) --(6,0.5)-- (5,0.5) -- cycle;
				
				\node at (-1.5, 0.2) {$A'$};
				\node at (0, 0.2) {$A$};
				\node at (4, 0.2) {$B$};		
				\node at (5.5, 0.2) {$B'$};	
				
				\draw (F1) -- (L1);
				\draw (F1) -- (L3);
				\draw (F2) -- (L2);
				\draw (F2) -- (L3);
				\draw (F3) -- (L1);
				\draw (F3) -- (L7);
				\draw (F4) -- (L3);
				\draw (F4) -- (L4);
				\draw (F5) -- (L1);
				\draw (F5) -- (L5);
				\draw (F6) -- (L2);
				\draw (F6) -- (L6);
				\draw (F7) -- (L3);
				\draw (F7) -- (L7);
			\end{tikzpicture}
		\end{center}
		\caption{\np-hardness reduction for the \bis problem on planar bipartite graphs when considering $(\costleader_s, \costfollower_s, \optimistic/\pessimistic)$.
		}
		\label{fig:reduction:bipartite}
	\end{figure}
	
	As \Cref{fig:reduction:bipartite} shows, the reduced instance is clearly a bipartite graph, however, note that, the coloring depicted in the figure is not according to the bipartition.
	Now, we argue that the constructed instance is additionally a planar graph.
	This holds because we can use a planar embedding of the original graph as a backbone to obtain a  planar embedding of the constructed graph: Consider a planar embedding of graph $G$, replace each $v_i$ by~$a_i$, delete each edge $e_j$ and instead add a new vertex $b_j$ such that $b_j$ is embedded at the midpoint of edge $e_j$.
	Now add all edges $\{a_i,b_j\}$; due to the planar embedding of $G$ the resulting graph is still planar.
	The additional vertices $a_i'$ and $b_j'$ along with the edges $\{a_i,a_i'\}$ and $\{b_j,b_j'\}$ can be appended to vertices $a_i$ and $b_j$, respectively, while preserving the current planarity of the graph.
	
	Finally we conclude our construction by defining the following weight functions:
	\begin{align*}
		w_\ell(v) &= \begin{cases}
			0 & \text{if } v \in A \cup B, \\
			1 & \text{if } v \in A', \\
			M & \text{if } v \in B',
		\end{cases} &
		w_f(v) &= \begin{cases}
			0 & \text{if } v \in A \cup A', \\
			100 & \text{if } v \in B, \\
			1 & \text{if } v \in B',
		\end{cases}
	\end{align*}
	where $M \in \NN, 100 \ll M$ is a sufficiently large number.
	
	Now it remains to argue that, for all the considered cases, $I$ is a yes-instance of the planar vertex cover problem if and only if $I'$ is a yes-instance of \bis.
	
	\textbf{Correctness of our construction:}
	We show that, in the \bis instance, the leader achieves value at least $k'\coloneqq (mM + n-k)$ if and only if the initial instance of the planar vertex cover problem has a vertex cover of size at most $k$.
	\backward Let $S\subseteq V$ be a vertex cover of graph $G = (V, E)$ of size at most $k$.
	Then the leader's action $L = \set{a_i \in A\mid v_i \in S} \cup \set{a_i' \in A'\mid v_i \notin S}$ results in an objective value of $\geq (mM + n - k)$ for her: Clearly, $L$ is an independent set.
	Note that, for each vertex $b_i \in B$, $b_i$ has a neighbor in $L$ because, in the planar vertex cover instance, the set~$S$ covers all the edges of the graph.
	Thus, given $L$, the follower cannot pick a vertex from the set~$B$ in his reaction~$F_L$.
	Moreover, the optimal follower's reaction is $F_L = B'$, independent of whether he is optimistic or pessimistic.
	Then $\costleader_s(L\cup F_L) = mM + |\set{a_i' \in A'\mid v_i \notin S}| \geq mM + n-k$.
	This holds for both the optimistic and pessimistic settings, since the follower's reaction is unique.
	
	\forward Consider that \bis contains a leader's action $L$ which results in an objective value at least $mM + n-k$.
	This can be achieved if and only if the follower chooses $F \supseteq B'$.
	Also, note that, if both $b_i$ and $b_i'$ are available then the follower will always pick $b_i$ over $b_i'$ because it gives him a better objective value.
	Thus, for each $b_i\in B$, $L$ must contain a vertex which is adjacent to $b_i$.
	Let $L_A = L \cap A$ be such that $B\subseteq N(L_A)$.
	Note that the corresponding vertex set $S' = \set{v_i\in V\mid a_i\in L_A}$ is a vertex cover of the original instance $I$ of the planar vertex cover problem.
	Now, it only remains to prove that $\abs{L_A} \leq k$.
	Suppose the contrary, $\abs{L_A} > k$, then $|L \cap A'| < n-k$.
	Thus, $\costleader_s(L\cup F_L) < (mM + n - k)$, a contradiction.
	Hence, $\abs{L_A} \leq k$.
	
	Thus the cases $(\costleader_s, \costfollower_s, \optimistic/\pessimistic)$ are \np-complete.	
\end{proof}

\subsubsection{Leader with sum type and follower with bottleneck type objective function }
\label{subsubsec:vis:bipartite:leader-sum:follower-bottleneck}
In this section, we restrict to the variants $(\costleader_s, \costfollower_b, \optimistic)$ and $(\costleader_s, \costfollower_b, \pessimistic)$ on bipartite graphs. 
We prove that \bis is polynomial-time solvable for both of these variants.

\begin{restatable}[]{theorem}{BipartiteCsDbOPoly}
	\label{thm:bipartite:c_sd_bo}
	For a bipartite graph $G = (V, E)$, \bis is solvable in polynomial time when considering the leader's objective function of sum type, the follower's objective function of bottleneck type and the follower behaves in an optimistic way, i.e., case $(\costleader_s, \costfollower_b, \optimistic)$.
\end{restatable}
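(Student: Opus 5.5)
We split on whether the leader's action is empty, following the structure already used for the bottleneck-follower variants in \Cref{thm:bis:gen:np-hard:c_sd_bp/o}.

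\textbf{Case $L\neq\emptyset$.} The follower's value satisfies $\costfollower_b(L\cup F)\le \costfollower_b(L)$ for every $F$. Moreover, $F=\emptyset$ attains the bound $\costfollower_b(L)$. Hence the follower's possible reactions are exactly the sets $F\subseteq V_f$ such that $L\cup F$ is independent and every $v\in F$ has $w_f(v)\ge \costfollower_b(L)$.

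Since the follower is optimistic, he picks among these the set maximizing $\costleader_s(L\cup F)$. The leader therefore effectively chooses the whole independent set $L\cup F$ herself, under one constraint: the follower vertices she uses must have $w_f$ at least the minimum $w_f$ over her own vertices. To handle this, we guess a threshold $t\in\{w_f(v)\mid v\in V_\ell\}$; there are at most $n$ choices. For each $t$ we form the vertex set
\[
U_t=\{v\in V_\ell\mid w_f(v)\ge t\}\cup\{v\in V_f\mid w_f(v)\ge t\}
\]
and compute a maximum $w_\ell$-weight independent set in the bipartite graph $G[U_t]$. This is polynomial by \cite{Edmonds}, via König's theorem and min-cut.

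Two subtleties need care. First, $L$ must be nonempty. We enforce this by also guessing a vertex $u\in V_\ell$ with $w_f(u)=t$, forcing $u$ into the solution, deleting $N[u]$, and solving the remaining bipartite instance; this costs $O(n)$ more guesses. Second, the true minimum over $L$ may exceed $t$. This is harmless: the set is still feasible for the actual threshold $\costfollower_b(L)$, which is larger, and any follower vertices with $w_f\ge t$ that were allowed for smaller $t$ are then still allowed. More precisely, for the actual $L$ with minimum $t^*=\costfollower_b(L)$, the follower's allowed set is $\{v\in V_f\mid w_f(v)\ge t^*\}$, exactly as in our candidate with $t=t^*$ and $u$ the vertex attaining the minimum. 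So every candidate produced is a realizable outcome, and the optimum is among the candidates.

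\textbf{Case $L=\emptyset$.} The follower must choose a nonempty independent $F$ maximizing $\min_{v\in F} w_f(v)$. The optimal value is $w^*=\max_{v\in V_f} w_f(v)$, and possible reactions are nonempty independent sets inside $W=\{v\in V_f\mid w_f(v)=w^*\}$. The optimistic follower maximizes $w_\ell$ over these, which is again a bipartite maximum-weight independent set problem on $G[W]$. If all $w_\ell$ are zero, nonemptiness is ensured by taking any single vertex.

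Finally, the leader's optimum is the maximum of the $L=\emptyset$ value and the best over all guessed pairs $(t,u)$. This gives $O(n)$ bipartite weighted independent set computations, each polynomial.

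The main obstacle is the correctness of the reformulation in the first case. We must check that optimism lets us merge leader and follower choices into a single optimization, and that nothing in the follower's objective restricts $F$ beyond the threshold condition. The latter holds because the follower's value is already pinned at $\costfollower_b(L)$ regardless of $F$. Bipartiteness is used only to solve the resulting independent set subproblems.
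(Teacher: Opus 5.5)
Your proposal is correct and is essentially the paper's proof: the paper's Algorithm~1 also enumerates the vertex $v^\star\in V_\ell$ that attains the leader's bottleneck $w_f$-value, computes a maximum $w_\ell$-weight independent set in the bipartite graph induced by $\{v\mid w_f(v)\ge w_f(v^\star),\ v\notin N[v^\star]\}$, and compares the result with the optimistic reaction to $L=\emptyset$. One small correction: your remark that a true minimum exceeding $t$ is ``harmless'' gets the monotonicity backwards, since a larger threshold admits fewer follower vertices and the candidate value could then be unattainable, but the point is moot because forcing $u$ with $w_f(u)=t$ pins $\costfollower_b(L)=t$ exactly.
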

\begin{proof}
	For $L=\emptyset$, $F_{L=\emptyset}$ consists of the max-weighted independent set with respect to the leader's weight function $w_\ell$, on the graph $G[F^\star]$, where $F^\star = \argmax_{v\in V_f} w_f(v)$, which is computable in polynomial-time as we are dealing with bipartite graphs.
	
	Observe that, given a non-empty leader's action $L\subseteq V_\ell$, the follower's objective value cannot be greater than $\costfollower_b(L)$.
	Moreover, since the follower is optimistic, he will always react with the maximum weighted independent set (with respect to the weight function $w_\ell$) on the graph $G[F^\star]$, where $F^\star = \set{v \in V_f \setminus N(L) \mid w_f(v) \geq \costfollower_b(L)}$.
	In this case, the leader's optimal action can be calculated using Algorithm \ref{alg:bipartite:(c_sd_bo)}.
	\begin{algorithm}[]
		\SetAlgoRefName{$1$}
		\KwIn{A bipartite graph $G = (V,E)$, and the weight functions $w_\ell, w_f\colon V \to \RR_+$.}
		\KwOut{An optimal leader's action $L_{opt}$ for the case $(\costleader_s, \costfollower_b, \optimistic)$.}
		Calculate the follower's reaction $F_{L=\emptyset}$.\\
		$val \coloneqq w_\ell{(F_{L=\emptyset})}$\\
		$L_{opt} \coloneqq \emptyset$\\
		$R \coloneqq V_\ell$\\
		\While{$R \neq \emptyset$}{
			Select $v^\star \in R$  \tcp*{$w_f(v^\star)$ sets the threshold for $\costfollower_b(L)$ value}
			$V' = \set{v' \in V \mid w_f(v')\geq w_f(v^\star), v'\notin N[v^\star]}$\\
			Calculate the maximum weighted independent set $S_{V'}$ of $G[V']$ according to $w_\ell$.\\
			\If{$w_\ell(v^\star) + \sum_{v\in S_{V'}}w_\ell(v) > val$}{
				$val \coloneqq w_\ell(v^\star) + \sum_{v\in S_{V'}}w_\ell(v)$\\
				$L_{opt} \coloneqq \set{v^\star}\cup (S_{V'}\cap V_\ell)$\\
			}
			$R \coloneqq R \setminus \{v^\star\}$\\
		}
		\Return $L_{opt}$.
		\caption{$(\costleader_s, \costfollower_b, \optimistic)$ on bipartite graphs}
		\label{alg:bipartite:(c_sd_bo)}
	\end{algorithm} 
	
	\emph{Description of Algorithm \ref{alg:bipartite:(c_sd_bo)}:} When $L\neq \emptyset$, the follower's objective value has a lower bound; it cannot be greater than $\costfollower_b(L)$.
	In fact, it suffices to consider that only one vertex from $L$ actually enforces the bound $\costfollower_b(L)$.
	Hence, we enumerate over all possible vertices $v^\star\in V_\ell$ that can influence the value of the lower bound $\costfollower_b(L)$ for the follower.
	Once, $v^\star\in V_\ell$ is fixed, the leader considers the set of vertices $V' = \set{v' \in V \mid w_f(v')\geq w_f(v^\star), v'\notin N[v^\star]}$, and calculates the maximum weighted independent set~$S_{V'}$ in the subgraph $G[V']$ according to the weight function $w_\ell$.
	The maximum weighted independent set obtained this way, over all the vertices $v^\star$ considered, gives us a leader's optimal action. 
	
	\emph{Correctness of Algorithm \ref{alg:bipartite:(c_sd_bo)}:} 
	A leader's action $L\subseteq V_\ell$ sets a threshold for the optimistic follower's reaction. 
	The follower chooses a reaction $F_L$ such that $\min_{v\in F_L} w_f(v) \geq \costfollower_b(L)$.
	We go over all possible threshold values $\costfollower_b(L)$ that a leader's action can impose on the follower.
	And for each one of them, we calculate an optimal solution.
	Let $S\subseteq V$ be the set which returns the best possible value for the leader among all the enumerated cases.
	Then the leader's optimal action is $S\cap V_\ell$.
	This proves the correctness of our algorithm.
	
	Moreover, there are at most polynomial many steps which individually take at most polynomial time.
	Thus, Algorithm \ref{alg:bipartite:(c_sd_bo)} runs in polynomial time.
\end{proof}

\begin{restatable}[]{theorem}{BipartiteCsDbPPoly}\label{thm:bipartite:c_sd_bp}
	Given a bipartite graph $G = (V, E)$, \bis is solvable in polynomial time when considering the leader's objective function of sum type, the follower's objective function of bottleneck type, and the follower behaves in a pessimistic way, i.e., case $(\costleader_s, \costfollower_b, \pessimistic)$.
\end{restatable}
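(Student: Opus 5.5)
The plan is to show that the follower's pessimistic reaction is so simple that the leader's problem collapses to a single maximum weight independent set computation on a bipartite graph, plus one special case. This mirrors the \np-membership argument in the proof of \Cref{thm:bis:gen:np-hard:c_sd_bp/o}.

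\emph{Case $L\neq\emptyset$.} First I fix a non-empty leader's action $L$ that is an independent set. Any reaction $F$ satisfies $\costfollower_b(L\cup F)\leq \costfollower_b(L)$. Hence the follower's optimal value is exactly $\costfollower_b(L)$, and $F=\emptyset$ attains it. So the follower's possible reactions are exactly the independent sets $F\subseteq\{v\in V_f\setminus N(L)\mid w_f(v)\geq \costfollower_b(L)\}$. Since $w_\ell\geq 0$ and $\costleader_s$ is additive, every such $F$ gives $\costleader_s(L\cup F)\geq \costleader_s(L)$. The pessimistic follower therefore reacts with $F_L=\emptyset$ (or an equivalent reaction of the same leader value), and the leader obtains exactly $\sum_{v\in L}w_\ell(v)$. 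Consequently, over all non-empty actions, the leader's best value is the weight of a maximum $w_\ell$-weight independent set in $G[V_\ell]$. This is computable in polynomial time, since $G[V_\ell]$ is bipartite and weighted independent set is polynomial on bipartite graphs~\cite{Edmonds}. If that set happens to be empty (all weights zero), I take any single vertex of $V_\ell$, which is still optimal. This case requires $V_\ell\neq\emptyset$.

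\emph{Case $L=\emptyset$.} Because of the constraint $L\cup F\neq\emptyset$, the follower must pick a non-empty independent set $F\subseteq V_f$. His bottleneck value is at most $\max_{v\in V_f}w_f(v)$, and it is attained precisely when $F\subseteq F^\star\coloneqq\argmax_{v\in V_f}w_f(v)$. Among such $F$, the pessimistic follower minimizes $\sum_{v\in F}w_\ell(v)$. By non-negativity, a single vertex $v\in F^\star$ minimizing $w_\ell(v)$ achieves this minimum. So the leader's value for $L=\emptyset$ is $\min_{v\in F^\star}w_\ell(v)$, which is computable in linear time. If $V_f=\emptyset$, the action $L=\emptyset$ is infeasible and is simply discarded.

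The algorithm outputs whichever of the two candidates gives the larger leader value, and correctness follows because every leader action falls into one of the two cases. The only step needing care is the first one: the claim that the pessimistic follower adds nothing. This relies on two facts. First, the follower's bottleneck value cannot be raised above $\costfollower_b(L)$ by any choice of $F$. Second, the leader's sum can only grow when vertices are added. I also need to handle ties in the follower's optimal set correctly, and I will state explicitly that $w_\ell,w_f\colon V\to\RR_+$ are non-negative. Bipartiteness is used only for the polynomial-time maximum weight independent set computation on $G[V_\ell]$.
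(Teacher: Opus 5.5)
Your proposal is correct and follows the paper's own argument. You split into $L=\emptyset$, where the pessimistic follower picks a $w_\ell$-minimal vertex of $\argmax_{v\in V_f} w_f(v)$, and $L\neq\emptyset$, where he adds nothing; you then compare the former value with a maximum $w_\ell$-weight independent set of the bipartite graph $G[V_\ell]$, which is exactly the paper's route. You are more careful than the paper about ties, the constraint $L\cup F\neq\emptyset$, and the degenerate cases $V_\ell=\emptyset$ or $V_f=\emptyset$.
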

\begin{proof}
	If $L=\emptyset$, then the follower's optimal reaction is to chose from the vertices which maximizes his objective value the one with the least $w_\ell$ value.
	Otherwise, if $L\neq \emptyset$, the pessimistic follower will not choose any vertex as it will only increase the leader's objective value and not his own.
	Thus, the leader accordingly chooses either $L=\emptyset$ or the maximum weighted independent set, with respect to $w_\ell$, in $V_\ell$.
	The latter can be calculated in polynomial time~\cite{Edmonds}.		
\end{proof}

The results obtained in \Cref{subsubsec:bis:bipartite:sum-type} and \Cref{subsubsec:vis:bipartite:leader-sum:follower-bottleneck} demonstrate a decrease in the computational complexity of \bis on bipartite graphs compared to the results on general graphs.
Naturally, this is because the underlying \independentset problem is easier on bipartite graphs than on general graphs.
However, when considering the variant $(\costleader_b, \costfollower_b, \pessimistic)$ on bipartite graphs in the next section, we retain that the problem is \np-complete on bipartite graphs similar to \Cref{thm:c_bd_bp} on general graphs.
In the next section, we show that, with slight changes, the same construction works for proving \np-hardness for the variants $(\costleader_b, \costfollower_s, \optimistic/\pessimistic)$.

\subsubsection{The remaining variants on bipartite graphs}
\label{subsubsec:bis:bipartite:remaining}

Next we consider the cases $(\costleader_b, \costfollower_b, \pessimistic)$ and $(\costleader_b, \costfollower_s, \optimistic/\pessimistic)$ on bipartite graphs, and prove the \np-completeness of \bis.
We modify the \np-hardness construction used in \Cref{thm:c_bd_bp} such that the reduced instances map to bipartite graphs.

\begin{restatable}[]{theorem}{BipartiteCbDsOPNPComplete}\label{thm:bipartite:c_bd_bp:c_bd_so/p}
	The decision version of the \bis problem is \np-complete on bipartite graphs when considering any of the variants $(\costleader_b, \costfollower_b, \pessimistic)$ and $(\costleader_b, \costfollower_s, \optimistic/\pessimistic)$.
	\vspace{-.5ex}
\end{restatable}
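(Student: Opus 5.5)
The proof has two parts: membership in \np, and \np-hardness on bipartite graphs. For membership, take the leader's action $L$ as certificate and compute the follower's optimal reaction $F_L$ in polynomial time, exactly as in \Cref{thm:c_bd_bp}.
\begin{itemize}
\item For $(\costleader_b,\costfollower_b,\pessimistic)$ the reaction is the same closed formula as there: pick a vertex of $V_f\setminus N(L)$ with $w_f(v)\ge \costfollower_b(L)$ minimising $w_\ell$, or the special rule when $L=\emptyset$.
\item For $(\costleader_b,\costfollower_s,\optimistic/\pessimistic)$, first perturb $w_f(v)\leftarrow w_f(v)\pm\epsilon\, w_\ell(v)$ as in \Cref{lem:bo:BISche:o/p-follower}. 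Then solve a maximum weight independent set on the bipartite graph $G[V_f\setminus N(L)]$, which is polynomial by \cite{Edmonds}.
\end{itemize}
One caveat: $\pm\epsilon\, w_\ell$ encodes a sum-type leader, not a bottleneck one. For a bottleneck leader I would instead guess the threshold $t=\min_{v\in F} w_\ell(v)$. The optimistic follower then maximises $w_f$ over vertices with $w_\ell\ge t$, and I would check which $t$ keeps the follower's value optimal. The pessimistic case is symmetric, forcing a vertex with $w_\ell\le t$, which is again a bipartite independent-set computation.

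For hardness I would start from the reduction of \Cref{thm:c_bd_bp}. Leader vertices representing vertices of $G$ go on one side of the bipartition. Follower vertices $v_e$ with $w_\ell(v_e)=0$ go on the other side, adjacent to the copies of the endpoints of $e$. The leader reaches bottleneck value $\ge 1$ iff every $v_e$ is blocked, i.e.\ iff her chosen vertices form a vertex cover.

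The only non-bipartite ingredient is the cliques on the $V^i$, which enforce $|L|\le k$; this is the main obstacle. My first attempt is to give each of the $k$ selection slots an explicit gadget. Each slot gets a bipartite structure of leader vertices $v^i$ (side $A$) and auxiliary leader vertices on side $B$, joined by edges so that a non-blocked follower vertex of weight $w_\ell=0$ appears unless exactly one ``choice'' is made per slot. Concretely, I would add a follower vertex $g^i$ (side $B$, $w_\ell=0$) adjacent to all of $V^i$, forcing at least one choice per slot. I would enforce ``at most one'' via paths $v^i - h^i_{v} - u^i$ of alternating ownership, so that choosing two vertices of the same slot unblocks some zero-weight follower vertex.

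If that proves too awkward, my fallback is to reduce instead from a problem whose budget is implicit. One candidate is \textsc{3-SAT}: leader literal vertices $x_i\in A$ and $\bar x_i\in B$ joined by an edge, so at most one literal per variable. For each variable, a follower vertex $g_i$ with $w_\ell=0$, made adjacent via subdivided leader copies to both literals, forces at least one. Follower clause vertices $c_j$ with $w_\ell=0$ are blocked by, and adjacent to, copies of the literal vertices satisfying them. Then the leader attains value $\ge 1$ iff $\phi$ is satisfiable.

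For the sum-type follower variants, I would adapt the weights in the same way the general reduction adapts them. Each sabotage vertex gets a large $w_f$, so the follower (optimistic or pessimistic alike) takes it whenever it is available. It also gets $w_\ell=0$, and every other vertex gets $w_\ell=1$. Hence the reaction is forced and the behaviour of the follower is irrelevant.
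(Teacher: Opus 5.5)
Your membership arguments are fine; the threshold computation handles the bottleneck leader correctly. The gap is in the hardness part, at exactly the step you call the main obstacle. You never give a bipartite gadget that limits the leader to at most one vertex per slot, and the one concrete gadget you name enforces the opposite.

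For fixed $L$ the blocked follower vertices are $N(L)\cap V_f$, which is monotone in $L$. So choosing a second vertex of a slot can never by itself unblock a follower vertex. In $v^i - h - u^i$ with a follower vertex $h$ in the middle (as ``alternating ownership'' suggests), $h$ is blocked as soon as one of $u^i,v^i$ is chosen. Requiring all such $h$ to be blocked therefore forces all but at most one vertex of $V^i$ into $L$. That is the reverse of ``at most one'', and $g^i$ does not bound $|L|$ either.

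A second choice can hurt the leader only by excluding a leader neighbour she needs for blocking. This is what the paper's construction uses:
\begin{itemize}
\item each $v^i$ gets a private leader twin $(v^i)'$ adjacent to it, with weights $(1,M)$;
\item each clique edge $\{u^i,v^i\}$ is replaced by a follower vertex $(u^iv^i)'$ with weights $(0,M)$, adjacent to $(u^i)'$ and $(v^i)'$.
\end{itemize}
Blocking $(u^iv^i)'$ requires one of the two twins, so $u^i$ and $v^i$ cannot both be in $L$. Conversely, the leader takes one cover vertex per slot plus all remaining twins. The graph is bipartite with sides $\{v_e\}\cup\{(v^i)'\}$ and $\{v^i\}\cup\{(u^iv^i)'\}$. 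Since the follower vertices are pairwise non-adjacent with $w_f=M$ and $w_\ell=0$, any available one drives the leader's value to $0$ in all three variants.

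Your \textsc{3-SAT} fallback has the same hole. A clause vertex needs all its neighbours on one side, so literals from the other side must be represented by copies. A copy $c$ of $\bar x_i$ placed on the side of $x_i$ must satisfy $c\in L\Rightarrow x_i\notin L$, and since these two vertices are on the same side, this again cannot be an edge. Subdividing with leader vertices does not provide this. In $c_j - s - \bar x_i$, the vertex $s$ may be taken exactly when $\bar x_i\notin L$, so it acts as a negation rather than a copy. Further subdivision vertices constrain $s$ only if some leader neighbour of $s$ is itself needed to block a follower vertex.

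What you need is precisely the exclusion path $c - c' - f - x_i' - x_i$. Here $c'$ and $x_i'$ are leader twins and $f$ is a follower vertex with $w_\ell=0$; this is the paper's gadget. The ``at least one literal'' vertices $g_i$ are unnecessary, since leaving a variable unset only hurts the leader. With that gadget in hand either reduction goes through, and your weight assignment for the sum-type follower then coincides with the paper's.
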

\begin{proof}
	The \np-hardness proof in \Cref{thm:c_bd_bp} is for the variant $(\costleader_b, \costfollower_b, \pessimistic)$ on general graphs wherein the constructed graph contains odd cycles.
	To show \np-hardness on bipartite graphs, we only need to get rid of these odd cycles and construct equivalent bipartite instances for each instance created in \Cref{thm:c_bd_bp}. 
	For this purpose, we introduce for each vertex $v^i \in V^i$, $i\in[k]$, an extra vertex $(v^i)'$. 
	We include the set of all these vertices in $V_\ell$ and assign weight $1$ for the leader and $M$ for the follower.
	Additionally, for each edge $\{u^i,v^i\}$, for all $i\in[k]$ we add a vertex $(u^iv^i)'$.
	We include the set of all these vertices in $V_f$ and assign weight $0$ for the leader and $M$ for the follower. 		
	Next, we replace each edge $\{u^i,v^i\}$ from the previous construction by edges $\set{u^i, (u^i)'}, \set{v^i, (v^i)'}, \set{(u^i)', (u^iv^i)'}$ and $\set{(v^i)', (u^iv^i)'}$ (see \Cref{fig:bo:bipartite:c_bd_bp}).
	
	\begin{figure}[t]
		\begin{center}
			\begin{tikzpicture}[scale=0.8]
				\def\height{3}
				\def\heightprime{4.5}
				\def\edgeheight{7}
				
				\tikzset{
					leader_node/.style={circle, draw=black, fill=leader_color, minimum size=6mm, inner sep=0pt},
					follower_node/.style={circle, draw=black, fill=follower_color, minimum size=6mm, inner sep=0pt}
				}
					
				\draw[black!40, dashed, rounded corners] (-0.5,\height+0.5) -- (3.5,\height+0.5) -- (3.5,\height-0.5) -- (-0.5,\height-0.5) -- cycle;
				\draw[black!40, dashed, rounded corners] (4.5,\height+0.5) -- (8.5,\height+0.5) -- (8.5,\height-0.5) -- (4.5,\height-0.5) -- cycle;
				\draw[black!40, dashed, rounded corners] (9.5,\height+0.5) -- (13.5,\height+0.5) -- (13.5,\height-0.5) -- (9.5,\height-0.5) -- cycle;
				\draw[black!40, dashed, rounded corners] (4.5,0.5) -- (9.5,0.5) -- (9.5,-0.5) -- (4.5,-0.5) -- cycle;
				\draw[black!40, dashed, rounded corners] (1.5,\edgeheight+0.5) -- (11.5,\edgeheight+0.5) -- (11.5,\edgeheight-0.5) -- (1.5,\edgeheight-0.5) -- cycle;
				\draw[black!40, dashed, rounded corners] (-0.5,\heightprime+0.5) -- (3.5,\heightprime+0.5) -- (3.5,\heightprime-0.5) -- (-0.5,\heightprime-0.5) -- cycle;
				\draw[black!40, dashed, rounded corners] (4.5,\heightprime+0.5) -- (8.5,\heightprime+0.5) -- (8.5,\heightprime-0.5) -- (4.5,\heightprime-0.5) -- cycle;
				\draw[black!40, dashed, rounded corners] (9.5,\heightprime+0.5) -- (13.5,\heightprime+0.5) -- (13.5,\heightprime-0.5) -- (9.5,\heightprime-0.5) -- cycle;
				
				\node[leader_node] (v11) at (0,\height) {$v_1^1$};
				\node[leader_node] (v12) at (1,\height) {$v_1^2$};
				\node () at (2,\height) {$\cdots$};
				\node[leader_node] (v1k) at (3,\height) {$v_1^k$};
				
				\node[leader_node] (v21) at (5,\height) {$v_2^1$};
				\node[leader_node] (v22) at (6,\height) {$v_2^2$};
				\node () at (7,\height) {$\cdots$};
				\node[leader_node] (v2k) at (8,\height) {$v_2^k$};
				
				\node () at (9,\height) {$\cdots$};
				
				\node[leader_node] (v31) at (10,\height) {$v_n^1$};
				\node[leader_node] (v32) at (11,\height) {$v_n^2$};
				\node () at (12,\height) {$\cdots$};
				\node[leader_node] (v3k) at (13,\height) {$v_n^k$};
				\node[leader_node] (v11') at (0,\heightprime) {\tiny $(v_1^1)'$};
				\node[leader_node] (v12') at (1,\heightprime) {\tiny $(v_1^2)'$};
				\node () at (2,\heightprime) {$\cdots$};
				\node[leader_node] (v1k') at (3,\heightprime) {\tiny $(v_1^k)'$};
				
				\node[leader_node] (v21') at (5,\heightprime) {\tiny $(v_2^1)'$};
				\node[leader_node] (v22') at (6,\heightprime) {\tiny $(v_2^2)'$};
				\node () at (7,\heightprime) {$\cdots$};
				\node[leader_node] (v2k') at (8,\heightprime) {\tiny $(v_2^k)'$};
				
				\node () at (9,\heightprime) {$\cdots$};
				
				\node[leader_node] (v31') at (10,\heightprime) {\tiny $(v_n^1)'$};
				\node[leader_node] (v32') at (11,\heightprime) {\tiny $(v_n^2)'$};
				\node () at (12,\heightprime) {$\cdots$};
				\node[leader_node] (v3k') at (13,\heightprime) {\tiny $(v_n^k)'$};
				
				\node[follower_node] (ve1) at (5,0) {$v_{e_1}$};
				\node[follower_node] (ve2) at (6,0) {$v_{e_2}$};
				\node[follower_node] (ve3) at (7,0) {$v_{e_3}$};
				\node () at (8,0) {$\cdots$};
				\node[follower_node] (vem) at (9,0) {$v_{e_m}$};
				\node[follower_node] (e1) at (2,\edgeheight) {};
				\node[follower_node] (e2) at (3,\edgeheight) {};
				\node[follower_node] (e3) at (4,\edgeheight) {};
				\node[follower_node] (e4) at (5,\edgeheight) {};
				\node[follower_node] (e5) at (6,\edgeheight) {};
				\node[follower_node] (e6) at (7,\edgeheight) {};
				\node[follower_node] (e7) at (8,\edgeheight) {};
				\node[follower_node] (e8) at (9,\edgeheight) {};
				\node[follower_node] (e9) at (11,\edgeheight) {};
				\node () at (10,\edgeheight) {$\cdots$};
				\draw[decorate,decoration={brace,amplitude=5pt}] (1.5, \edgeheight+0.8) -- node[above=5pt,align=center]{$(u^iv^i)' \quad \forall u,v\in V, i\in[k]$} (11.5, \edgeheight+0.8);
				
				\node at (-1.2, \heightprime) {\ew{1}{M}};
				\node at (-1.2, \height) {\ew{1}{M}};
				\node at (3.8, 0) {\ew{0}{M}};
				\node at (0.8, \edgeheight) {\ew{0}{M}};

				\draw (e1) --(v11');
				\draw (e1) --(v21');
				\draw (e2) --(v11');
				\draw (e2) --(v31');
				\draw (e3) --(v21');
				\draw (e3) --(v31');
				\draw (e4) --(v12');
				\draw (e4) --(v22');
				\draw (e5) --(v12');
				\draw (e5) --(v32');
				\draw (e6) --(v22');
				\draw (e6) --(v32');
				\draw (e7) --(v1k');
				\draw (e7) --(v2k');
				\draw (e8) --(v3k');
				\draw (e8) --(v1k');
				\draw (e9) --(v3k');
				\draw (e9) --(v2k');
				
				\draw (v11) --(v11');
				\draw (v12) --(v12');
				\draw (v1k) --(v1k');
				\draw (v21) --(v21');
				\draw (v22) --(v22');
				\draw (v2k) --(v2k');
				\draw (v31) --(v31');
				\draw (v32) --(v32');
				\draw (v3k) --(v3k');
				
				\draw (v11) --(ve1);
				\draw (v12) --(ve1);
				\draw (v1k) --(ve1);
				\draw (v11) --(ve2);
				\draw (v12) --(ve2);
				\draw (v1k) --(ve2);
				\draw (v21) --(ve2);
				\draw (v22) --(ve2);
				\draw (v2k) --(ve2);			
				\draw (v21) --(ve3);
				\draw (v22) --(ve3);
				\draw (v2k) --(ve3);			
				\draw (v31) --(vem);
				\draw (v32) --(vem);
				\draw (v3k) --(vem);
			\end{tikzpicture}
		\end{center}
		\caption{\np-hardness reduction for the \bis on bipartite graphs when considering $(\costleader_b, \costfollower_b, \pessimistic)$ or $(\costleader_b, \costfollower_b, \optimistic)$. The blue vertices form the set $V_\ell$ whereas the red vertices form the set $V_f$.}
		\label{fig:bo:bipartite:c_bd_bp}
	\end{figure}
	
	Note that the resultant graph is bipartite with two parts being $\{v_e\mid e\in E(G)\}\cup \{(v^i)'\mid v\in V(G), i\in[k]\}$ and $\{v^i\mid v\in V(G),i\in [k]\} \cup \{(u^iv^i)'\mid u,v\in V(G), i\in[k]\}$.
	
	We further prove that the leader achieves objective value at least $1$ if and only if the original instance of the \textsc{Vertex Cover} problem has a vertex cover of size at most $k$.
	Note that the leader achieves objective value $0$ if and only if the follower's reaction satisfies $F\neq \emptyset$.
	In order to make sure that $F = \emptyset$, we claim that the leader has to choose at most one vertex from each $V^i$. 
	Consider the contrary.
	If two vertices $u^i$ and $v^i$ in $V^i$ belong to the leader's (optimal) action, then the leader cannot pick either of $(u^i)'$ and $(v^i)'$.
	Then a pessimistic follower would certainly pick $(u^iv^i)'$ in his solution set~$F$.
	Moreover, an optimistic or pessimistic follower with sum type objective function would also pick the available vertex $(u^iv^i)'$ into his solution, as this will increase his objective value by $M$.
	Thus, this construction works for the variants  $(\costleader_b, \costfollower_b, \pessimistic)$ and $(\costleader_b, \costfollower_s, \optimistic/\pessimistic)$ on bipartite graphs. 
	
	Let $S$ be a vertex cover of graph $G$ of size at most $k$.
	Then $L$ which includes, for each $i\in[k]$, a copy $v^i$ of a distinct vertex $v\in S$ and all other $(v^i)'$ copies whose neighbors have not been chosen earlier, is an optimal leader's action resulting in objective value $1$ for the leader.
	
	Similarly, if the constructed instance of \bis is a yes-instance, then there exists a leader's action $L$, which results in her objective value being $1$; here the cardinality $|L \cap \{v^i\mid v\in V(G), i\in[k]\}| \leq k$.
	Construct a set of the corresponding vertices in $G$.
	This set is a vertex cover of $G$ of size at most $k$, since for each vertex $v_e$, there is at least one neighbor in $L \cap \{v^i\mid v\in V(G), i\in[k]\}$. 
	
	Thus, the above construction proves \np-hardness on bipartite graphs for the cases $(\costleader_b, \costfollower_b, \pessimistic)$ and $(\costleader_b, \costfollower_s, \optimistic/\pessimistic)$.
	
	Moreover, the variant $(\costleader_b, \costfollower_b, \pessimistic)$ belongs to the class \np, as argued in \Cref{thm:c_bd_bp}. 
	
	To show the \np containment of the variant $(\costleader_b, \costfollower_s, \optimistic)$, we claim that $L\cup F\subseteq V_\ell\cup V_f$ serves as a valid certificate.
	First calculate, for the leader's action $L$, a maximum weighted independent set~$I$---with respect to weight function~$w_f$---in $G[S]$ where $S=\{v\in V_f \mid v\notin N(L)\}$; it is possible to do this in polynomial time as we are dealing with bipartite graphs.
	A follower's possible reaction $F$ to $L$ must satisfy $\sum_{v\in I} w_f(v) = \sum_{v\in F} w_f(v)$.
	If this holds, further check that $L\cup F$ is an independent set and calculate the leader's objective value $\argmin_{v\in L\cup F} w_\ell(v)$.
	Thus, it is possible to verify a yes-instance of the decision version of $(\costleader_b, \costfollower_s, \optimistic)$ in polynomial time. 
	Hence, the $(\costleader_b, \costfollower_s, \optimistic)$ variant of \bis is \np-complete on bipartite graphs.
	
	For the last variant $(\costleader_b, \costfollower_s, \pessimistic)$, the follower behaves in a pessimistic way and has sum type objective function.
	So his goal is to select a maximum weighted---with respect to function $w_f$---independent set~$F$ from $G[S]$ that additionally minimizes the value $\min_{v\in F} w_\ell(v)$.
	This can be achieved by doing the following:
	First, consider the vertices in $S$ in an ascending order of their $w_\ell$ values.
	Then one-by-one, check for each vertex $v\in S$, the maximum weighted independent set---with respect to function $w_f$---in $G[S]$ that includes the vertex $v$. 
	If the follower's total weight is the same as that of set~$I$ calculated above, then this serves as the optimal reaction of the follower towards $L$.
	If not, then calculate for the next vertex $v$ in $S$.
	This way, we can calculate the follower's optimal reaction to $L$ in polynomial time and check for the decision bounds of the problem. 
	Thus, the $(\costleader_b, \costfollower_s, \pessimistic)$ variant of \bis is \np-complete on bipartite graphs.
\end{proof}

Recall from \Cref{thm:general:P}, \bis for the variant $(\costleader_b, \costfollower_b, \optimistic)$ is polynomial-time solvable on general graphs.
Consequently, it is also polynomial-time solvable on bipartite graphs.

	\section{Conclusion and Future Directions} 
	\label{sec:conclusion}
	
		In this study, we considered partitioned-items bilevel optimization versions of the Independent Set and the Interval Selection problems. 
		We considered eight different variants emerging from selecting either sum type $(\costleader_s, \costfollower_s)$ or bottleneck type $(\costleader_b, \costfollower_b)$ objective functions for the leader and the follower, and further considering the optimistic $(\optimistic)$ or pessimistic $(\pessimistic)$ settings, which reflect the behavior of the follower towards the leader when faced with multiple choices for an optimal reaction.
		
		We studied the \BIS problem on the class of simple undirected graphs and also on bipartite graphs (see \Cref{results:overview}).
		We completely settled the complexity status of all eight variants of this problem on both of these graph classes.
		As anticipated, when going from simple undirected graphs to bipartite graphs, the computational complexity of most of the variants reduced by exactly one level of the polynomial hierarchy.
		However, interestingly, the variant $(\costleader_b, \costfollower_b,\pessimistic)$ disregarded this trend, and for this, the computational complexity of the problem remains the same. 
		
		We also studied the \bisch problem.
		Here we only considered the variants $(\costleader_s, \costfollower_s,\optimistic)$ and $(\costleader_s, \costfollower_s,\pessimistic)$, and we gave an algorithm that runs in $\mathcal{O}(n^4\log n)$ time.
		It would be interesting to know what complexity results hold for the other variants. 
		
		Some of the tractability results that we have presented involves exhaustively searching through all possible leader's actions and the follower's reactions to them.
		It would be nice to design algorithms that exploit some structure of the problem.
		
		For the future outlook, one could deviate from the partitioned-items bilevel problems and consider the settings wherein the sets of variables controlled by the leader and the follower are not disjoint.
	
	\subsection*{Acknowledgment} 
	The author would like to thank Dennis Fischer for some valuable discussions on this topic.
		
	\bibliographystyle{plain}
	\bibliography{bibliography}	
\end{document}